\newcommand{\Tc}{{\mathcal{T}}}
\newcommand{\Fc}{{\mathcal{F}}}
\newcommand{\wt}{\widetilde}
\newcommand{\image}{{\mathrm{im}\,}}
\begin{document}

\title{Quantifying analogy of concepts via ologs and wiring diagrams}

%\date{}

\author[Jason Lo]{Jason Lo}
\address{Department of Mathematics \\
California State University, Northridge\\
18111 Nordhoff Street\\
Northridge CA 91330 \\
USA}
\email{jason.lo@csun.edu}

\begin{abstract}
We build on the theory of ontology logs (ologs) created by Spivak and Kent, and define a notion of wiring diagrams.  In this article, a wiring diagram is a finite directed labelled graph.  The labels correspond to types in an olog; they can also be interpreted as readings of sensors in an autonomous system.  As such, wiring diagrams can be used as a framework for an autonomous system to form abstract concepts.  We show that the graphs underlying skeleton wiring diagrams form a category.  This allows skeleton wiring diagrams to be compared and manipulated using techniques from both graph theory and category theory.  We also extend the usual definition of graph edit distance to the case of wiring diagrams by using operations only available to wiring diagrams, leading to a metric on the set of all skeleton wiring diagrams.  In the end, we give an extended example on calculating the distance between two concepts represented by wiring diagrams, and explain how to apply our framework to any application domain.
\end{abstract}

\maketitle

\tableofcontents

%\begin{multicols}{2}

%% to be inserted? \textcolor{blue}{Add to end of Section \ref{sec:WDforprocesses}? [Preamble to section on the mathematics of WDs: emphasise that because the labels of a wiring diagram can be represented by types in an olog, the labels always correspond to "events" that can be detected by sensors. Then explain how, when a series of events is detected by sensors, we can try to guess which wiring diagrams it might have come from.  This yields the mathematical question of how many WDs there are for a given number of vertices, motivating a separate section in this article on the relevant counting problems]}

\section{Introduction}

Analogical reasoning is a technique that humans often use in problem-solving.  We apply analogical reasoning when we are dealing with a problem in a new situation, where the problem bears some resemblance to ones that we have solved before.  In a more mundane, everyday situation, this could mean figuring out how to catch a train in a city we have never been in - from past experiences, we might gather that a plan that has a good chance of working is to buy a ticket first, and then find the right platform for the train.  Of course, the details as to how to purchase a ticket and how to find a specific train platform would vary from location to location, but as long as we attempt to execute steps that are `close' to the plan we have in mind, or `similar' to steps that we had taken in the past for getting onto the right train, our plan has a reasonable chance of succeeding.  We also use analogical reasoning in situations that are much more nuanced and complex, such as in scientific research, in setting government policies, or in legal arguments in the courtroom.  The recognition of analogues is also important in human experiences, such as in art, literature, and music.

In order to apply analogical reasoning in designing a problem-solving framework by autonomous systems, we first need a way to recognize when two concepts are similar.  In other words, we need a way to \emph{quantify} the similarity between two concepts.  Although there are existing methods for quantifying analogy such as word-embedding algorithms \cite{petersen-van-der-plas-2023-language,drozd-etal-2016-word}, these methods use statistical or probabilistic techniques that rely on having enough data, which is not always how humans recognize analogies.  Humans can recognize  analogies by determining the internal structures of concepts and by categorizing concepts \cite{KRAWCZYK2018227}.

In this article, we describe a mathematical approach to quantifying the analogy of two concepts.  Our approach builds on the idea of ontology logs, or \emph{ologs}, first coined in a paper by Spivak and Kent in 2012 \cite{SpivakKent}.  Ologs gives a direct connection between data (such as those received from sensors in an autonomous machine) and category theory (a part of mathematics that studies the relations among objects, independent of context).  In particular, we define the notion of  \emph{wiring diagrams}, which are labelled directed graphs satisfying certain axioms.  Wiring diagrams can be used to represent processes that occur over time, and hence can be used to represent complex concepts.  The labels of wiring diagrams correspond to concepts that appear as objects in ologs, whereas the underlying directed graphs of wiring diagrams themselves form a category.  Since ologs are themselves categories, we can compare wiring diagrams using both the underlying graph structures and the underlying categorical structures, thus quantifying the similarity between two concepts (as long as they are represented as wiring diagrams). 

\paragraph[Why ologs?]  Ologs give a way to organize concepts and the relations among them.  Every olog is a category in the sense of category theory, which is a language that is used across major branches of modern mathematics.  In addition, every olog is  associated to a database schema.  As a result, ologs provide a bridge via which tools from different areas of mathematics such as algebra, topology and geometry can be used to organize and understand data.  Indeed, ologs have been applied to fields such as biology \cite{spivak2011category,WuYY},  linguistics \cite{PerezSpivak}, materials design and manufacturing \cite{giesa2012category,brommer2015categorical}, among others.

\paragraph[Wiring diagrams]\label{para:intro-WD} Wiring diagrams have long been used to represent the various components and connections in an electric circuit.  Mathematically, wiring diagrams can  be defined and studied as operads \cite{SR-WDdisc,spivak2013operad,VSL,yau2018operads}.  In this article, we settle for a more simplistic  definition of a wiring diagram -  roughly speaking, a directed graph with labels that correspond to objects in an olog.  Of course, it would be interesting to work out the precise connections between the operadic approach towards studying discrete-time processes taken in \cite{SR-WDdisc} and our approach.

\paragraph[Outline of the article]  In Section \ref{sec:ologsreview}, we give a brief example to illustrate the basic terminology from the theory of ologs.  In Section \ref{sec:distbtwconcepts-start}, we review the idea that any classification scheme for concepts that involves a series of `multiple-choice' questions - such as the dichotomous identification key for insects that students may learn in high school - can be constructed as an olog.  We also list in Section \ref{para:tipspopulateolog} basic ideas for populating an olog for use in a specific application domain.  In Section \ref{sec:quantifysim-1}, we demonstrate how intangible concepts such as relations among physical entities can  be represented by ologs.  We also point out the obvious fact, that since an olog has an underlying graph, one can use any reasonable metric on graphs such as the shortest-distance metric to define a distance between any two objects in an olog.  

In Section \ref{sec:WDforprocesses}, we give the mathematical definition of our version of wiring diagrams.  We prove a couple of elementary mathematical properties of wiring diagrams in Lemmas \ref{lem:WDgraph-1} and \ref{lem:WDgraphsRaxal}.  Then, through an example in \ref{eg:buyingcoffee}, we illustrate how the occurence of a concept that is represented by a wiring diagram can be detected using data collected from sensors over time.  

In Section \ref{sec:catofWDs}, we define the notion of skeleton wiring diagram graphs, or skeleton WD graphs.  These are directed graphs that are the underlying graphs of skeleton wiring diagrams.  We show in Section \ref{para:catskWDgs} that skeleton WD graphs with a common set $V$ of vertices form a category $\mathcal{R}(V)$.  As such, morphisms in this category give us new ways to compare wiring diagrams, i.e.\ new ways to compare concepts, that were not possible if one only considered traditional edit operations on graphs.  Then in Section \ref{sec:distanceedit-WD}, we use morphisms in the category $\mathcal{R}(V)$ to extend the usual definition of elementary edit operations on graphs to the case of wiring diagrams.

In Section \ref{sec:compareanalogy}, we give an extended example showing how the ideas in Sections \ref{sec:WDforprocesses} and \ref{sec:catofWDs} can be implemented to quantify the analogy between two concepts.  We chose the concepts of an `electric car charging station' and a `bus'.  Both are physical entities that are capable of changing a characteristic of another physical entity when a particular relation is satisfied.  We explain how to define relevant sensors and  ologs, and then how to construct wiring diagrams that can be used to represent the two concepts `electric car charging station' and `bus'.  Then, we show how elementary edit operations on wiring diagrams can be used to define a `distance' between these two concepts, thus achieving our goal of quantifying analogy between concepts that are represented by wiring diagrams.  In \ref{para:summaryofcalc}, we give a list of steps that one can follow in order to implement the ideas in this article to quantify analogy in any application domain of the reader's choice.  Finally, we end the article with a brief discussion on future directions in Section \ref{sec:futureDs}.

\paragraph[Acknowledgements] The author would like to thank David I Spivak and Nima Jafari for helpful comments on earlier versions of the manuscript.    This article is based upon work supported by a DARPA Young Faculty Award (number D21AP10109-02) and an Air Force Office of Scientific Research grant (number FA9550-24-1-0268).

\section{Ologs and data}\label{sec:ologsreview}

We assume the reader has a rudimentary knowledge of the language of category theory and ologs, including the concept of fiber product (or `pullback') in a category.  Basic concepts in  category theory can be found in books such as \cite{SpivakCTS,walters_1992,Awodey} which are aimed at a general audience, or  \cite{maclane:71} which is aimed at  mathematicians.  On the other hand, a quick introduction to ologs can be found in the paper by Spivak and Kent \cite[Sections 1-3]{SpivakKent} in which ologs were first defined.  In this section, we briefly recall some basic terminology from the theory of ologs by way of an example.  We will also assume throughout this article that all the categories that arise are small.

\paragraph[Example]  An olog is a category in which
\begin{itemize}
\item[(i)] Each object and each arrow is labeled with text to indicate their meaning.
\item[(ii)] Each arrow represents a relation that corresponds to a function.
\end{itemize}
Property (ii) is one of the main differences between the olog approach and the knowledge graph approach to knowledge representation.  It implies that any two consecutive arrows can be composed (because any two functions where the codomain of one equals the domain of the other can be composed), which is one of the requirements of a category.  Since an olog is a category, which has an underlying graph, both graph-theoretic and category-theoretic tools are available when dealing with ologs.

For example, Figure \ref{fig:olog1}  is an olog with three objects and two arrows.  In this olog, given any pair $(p,c)$ where $p$ is a person, $c$ is a car, and $p$ owns $c$, the arrow $p$ (resp.\ $c$) represents the operation that ``forgets'' the information $c$ (resp.\ $p$) and only remembers $p$ (resp.\ $c$); using a more mathematical language, we can think of $p$ and $c$ as the first and second projections from the ordered pair $(p,c)$, respectively.
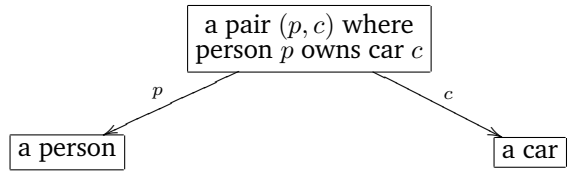
\begin{figure}[h]
  \centering
   \[
    \xymatrix{
       % &&  *+[F]{\txt{context}} \ar[d(0.75)] && \\
   & *+[F]{\txt{a pair $(p,c)$ where \\
   person $p$ owns car $c$}} \ar[dl]_(.6)p \ar[dr]^(.6)c & \\
     *+[F]{\txt{a person}}  & &  *+[F]{\txt{a car
  }}
    }
    \]
    \caption{An example of an olog.}
    \label{fig:olog1}
\end{figure}
We will follow the language in \cite{SpivakKent}, and refer to objects in an olog as \emph{types},  arrows in an olog as \emph{aspects}, and write  $\lceil$something$\rceil$ instead of $\xymatrix{ *+[F]{\txt{something}}}$ when we want to refer to a type in the main text of this article.  We will also use the notation $\dashv$p$\vdash$ when we refer to an aspect labelled as $p$.

We can regard an olog as a representation of the internal knowledge of an autonomous system.  In this article,  we will focus on ologs where, for every point $t$ in time, each type $\Tc$ in an olog corresponds to a table $\Fc_t(\Tc)$ containing all  instances of the concept $\Tc$ known to the autonomous system.  For example, suppose at the point $t$ in time, there are 5 persons known to the system, say  Adam, Betty, Carlos, Dolly, Eric, and  4 cars known to the system, say (labeled by their license plate numbers) 9XL3A, 2HA1T, 8WT9R, 5RV3Q, 6PK7M.  Suppose also that at time $t$, the system is aware that Betty owns the car 9XL3A, Carlos owns 6PK7M, and that Dolly owns 5RV3Q.  Then the three types in Figure \ref{fig:olog1} would yield the three tables in Table \ref{tab:tables1}, which together constitute a simple database containing data that corresponds to the concepts in the olog in Figure \ref{fig:olog1}.  The rows of these tables are called the \emph{instances} of the corresponding type.  The arrows $p$ and $c$ in Figure \ref{fig:olog1} then yield the operations $\Fc_t(p), \Fc_t(c)$ which, respectively, sends each pair in the middle table to  either the first coordinate (`person') or the second coordinate (`car').  For example, $\Fc_t(p)$ would send the pair `(Betty, 9XL3A)' to `Betty' while $\Fc_t(c)$ would send it to `9XL3A'.  In the language of category theory, we say that $\Fc_t$ is a functor.  (See \cite{SpivakKent,SpivakCTS} for more on the connections between ologs and database schemas.)

\bigskip

%\listoftables
%...
\begin{table}
\begin{subtable}[c]{0.3\textwidth}
\centering
\begin{tabular}{|c|}
\hline
 Person  \\
\hline Adam \\
 Betty  \\
 Carlos \\
 Dolly \\
 Eric \\
\hline
\end{tabular}
\subcaption{Table for the type $\lceil$a person$\rceil$ }
\end{subtable}
\begin{subtable}[c]{0.3\textwidth}
\centering
\begin{tabular}{|c|}
\hline
 Pair $(p,c)$ where \\ person $p$ owns car $c$  \\
\hline (Betty, 9XL3A) \\
  (Carlos, 6PK7M) \\
 (Dolly, 5RV3Q) \\
\hline
\end{tabular}
\subcaption{Table for the type $\lceil$a pair $(p,c)$ where person $p$ owns car $c$$\rceil$}
\end{subtable}
\begin{subtable}[c]{0.3\textwidth}
\centering
\begin{tabular}{|c|}
\hline
 Car  \\
\hline 9XL3A \\
 2HA1T  \\
 8WT9R \\
 5RV3Q \\
 6PK7M \\
\hline
\end{tabular}
\subcaption{Table for the type $\lceil$a car$\rceil$}
\end{subtable}
\caption{\label{tab:tables1} Tables that together form a database.}
\end{table}

\bigskip

\section{Using ologs to classify concepts}\label{sec:distbtwconcepts-start}

Since an olog is defined to be a category in the mathematical sense, all the tools in category theory apply to ologs.  The fiber product construction in category theory, for example, offers a way to make precise the `overlap' or intersection of two concepts.  Using fiber products, any scheme that classifies a collection of concepts via a series of multiple-choice questions, such as a flow chart for identifying insects that are often taught in high school biology  can be incorporated into an olog.

\paragraph[From a classification scheme to an olog] \label{para:classifytoolog} Consider the following series of questions, which might be used as part of a scheme that tries to classify different types of transport vehicles:

\begin{itemize}
\item Question 1: Does it have wheels? Answer:
  \begin{itemize}
  \item[1.] Yes
  \item[2.] No
  \end{itemize}
  \item Question 2: What power source does it use? Answer:
  \begin{itemize}
  \item[1.] Human power.
  \item[2.] Electricity.
  \item[3.] Gas.
  \end{itemize}
\end{itemize}

Recording the answers to these questions when we apply them to a bicycle and a gas-powered car, we obtain the table in Table \ref{tab:MQeg}.

\begin{table}[h!]
\centering
\begin{tabular}{c | c c}
  Question & Bicycle & Gas-powered car \\ \hline
  Question 1 & 1 & 1 \\
  Question 2 & 1 & 3 \\
\end{tabular}
\caption{}
\label{tab:MQeg}
\end{table}

The answers recorded in Table \ref{tab:MQeg} allow us to associate the concepts `bicycle' and `gas-powered car' to different vectors in $\mathbb{R}^2$
\begin{align*}
  \text{bicycle} &\mapsto (1,1) \\
  \text{gas-powered car} &\mapsto (1,3).
\end{align*}
 The point here, however,  is that we can  construct an olog  that distinguishes $\lceil$a bicycle$\rceil$ and $\lceil$a gas-powered car$\rceil$ as two distinct types.  To do this, let us begin with the simple olog in  \eqref{fig:olog3}, which represents the fact `a bicycle is a human-powered vehicle'.

\begin{comment}
\begin{figure}[h]
\centering
\includegraphics[scale=0.6]{p3b.jpg}

\caption{}
\label{fig:olog3}
\end{figure}
\end{comment}

\begin{equation}\label{fig:olog3}
\begin{tikzcd}
	{\boxed{\text{a bicycle}}} && {\boxed{\begin{matrix}\text{a human-powered} \\ \text{vehicle}\end{matrix}}}
	\arrow["{\text{is}}", from=1-1, to=1-3]
\end{tikzcd}\end{equation}

The database schema corresponding to this olog consists of two tables: a table listing all known instances of a bicycle, and a table listing all known instances of a human-powered vehicle,  such that every instance that appears in the first table also appears in the second table.  Table \ref{tab:tables2} shows an example of such a schema.

\begin{table}[h]
\begin{subtable}[c]{0.3\textwidth}
\centering
\begin{tabular}{|c|}
\hline
 Bicycle  \\
\hline
 Bicycle 1  \\
 Bicycle 2 \\
 Bicycle 3 \\
\hline
\end{tabular}
\subcaption{Table for the type $\lceil$a bicycle$\rceil$ }
\end{subtable}
\begin{subtable}[c]{0.3\textwidth}
\centering
\begin{tabular}{|c|}
\hline
 Human-powered   \\
 vehicle \\
\hline
  Kayak 1 \\
  Bicycle 1\\
  Kayak 2 \\
  Bicycle 2\\
  Skateboard 1 \\
  Bicycle 3\\
\hline
\end{tabular}
\subcaption{Table for the type $\lceil$a human-powered vehicle$\rceil$}
\end{subtable}
\caption{An example of tables in a database schema for the olog in Figure \ref{fig:olog3}.}
\label{tab:tables2}
\end{table}

Intuitively, the concept of a human-powered vehicle represents the overlap of two different concepts: a transport vehicle, and the use of human power.  More concretely, a human-powered vehicle can be defined as a transport vehicle that uses human power as a source of power.  As a result, in the setting of ologs,  we can construct the type $\lceil$a human-powered vehicle$\rceil$ using a fiber product  as follows.  We begin with  the olog in  \eqref{fig:olog4}, where the vertical arrow is the function that takes any transport vehicle as its input, and gives its type of power source (e.g.\ human power, gas, electricity, etc.) as its output.

\begin{comment}
\begin{figure}[h]
\centering
\includegraphics[scale=0.6]{p4b.jpg}
\caption{}
\label{fig:olog4}
\end{figure}
\end{comment}

\begin{equation}\label{fig:olog4}
\begin{tikzcd}
	&& {\boxed{\begin{matrix}\text{a transport} \\ \text{vehicle}\end{matrix}}} \\
	\\
	{\boxed{\text{human power}}} && {\boxed{\begin{matrix}\text{a type of} \\ \text{power source}\end{matrix}}}
	\arrow["{\text{is}}", from=3-1, to=3-3]
	\arrow["{\text{has as power source}}", from=1-3, to=3-3]
\end{tikzcd}\end{equation}

The pullback of the olog in  \eqref{fig:olog4} is  \eqref{fig:olog5}, in which  the upper and the left arrows are newly generated in the pullback construction.  The upper  row  represents the fact `a human-powered vehicle is a transport vehicle', and the left  vertical arrow represents the same function as the right vertical arrow (in this case, every instance of a human-powered vehicle uses human power as its power source, and so the codomain of the left vertical arrow is simply `human power').

\begin{comment}
\begin{figure}[h]
\centering
\includegraphics[scale=0.6]{p6b.jpg}
\caption{}
\label{fig:olog5}
\end{figure}
\end{comment}

\begin{equation}\label{fig:olog5}
\begin{tikzcd}
	{\boxed{\begin{matrix}\text{a human-powered} \\ \text{vehicle}\end{matrix}}} && {\boxed{\begin{matrix}\text{a transport} \\ \text{vehicle}\end{matrix}}} \\
	\\
	{\boxed{\text{human power}}} && {\boxed{\begin{matrix}\text{a type of} \\ \text{power source}\end{matrix}}}
	\arrow["{\text{is}}", from=3-1, to=3-3]
	\arrow["{\text{has as power source}}", from=1-3, to=3-3]
	\arrow["{\text{is}}", from=1-1, to=1-3]
	\arrow[from=1-1, to=3-1]
\end{tikzcd}\end{equation}

Similarly, we can construct the concept $\lceil$a gas-powered vehicle$\rceil$ in an olog using a fiber product as in \eqref{fig:olog6}.

\begin{comment}
\begin{figure}[h]
\centering
\includegraphics[scale=0.6]{p5b.jpg}
\caption{}
\label{fig:olog6}
\end{figure}
\end{comment}

\begin{equation}\label{fig:olog6}
\begin{tikzcd}
	{\boxed{\begin{matrix}\text{a gas-powered} \\ \text{vehicle}\end{matrix}}} && {\boxed{\begin{matrix}\text{a transport} \\ \text{vehicle}\end{matrix}}} \\
	\\
	{\boxed{\text{gas}}} && {\boxed{\begin{matrix}\text{a type of} \\ \text{power source}\end{matrix}}}
	\arrow["{\text{is}}", from=3-1, to=3-3]
	\arrow["{\text{has as power source}}", from=1-3, to=3-3]
	\arrow["{\text{is}}", from=1-1, to=1-3]
	\arrow[from=1-1, to=3-1]
\end{tikzcd}\end{equation}

Since the right vertical arrows in  \eqref{fig:olog5} and \eqref{fig:olog6} coincide, both of these ologs can be incorporated into the larger olog in  \eqref{fig:olog2}.

\begin{comment}
\begin{figure}[h]
    \centering
    \includegraphics[scale=0.6]{p7f.jpg}
    \caption{}
    \label{fig:olog2}
\end{figure}
\end{comment}

% https://q.uiver.app/#q=WzAsOCxbNSwwLCJcXGJveGVke1xcYmVnaW57bWF0cml4fVxcdGV4dHthIHRyYW5zcG9ydH0gXFxcXCBcXHRleHR7dmVoaWNsZX1cXGVuZHttYXRyaXh9fSJdLFs1LDIsIlxcYm94ZWR7XFxiZWdpbnttYXRyaXh9XFx0ZXh0e2EgdHlwZSBvZn0gXFxcXCBcXHRleHR7cG93ZXIgc291cmNlfVxcZW5ke21hdHJpeH19Il0sWzMsMiwiXFxib3hlZHtcXHRleHR7aHVtYW4gcG93ZXJ9fSJdLFszLDAsIlxcYm94ZWR7XFxiZWdpbnttYXRyaXh9XFx0ZXh0e2EgaHVtYW4tcG93ZXJlZH0gXFxcXCBcXHRleHR7dmVoaWNsZX1cXGVuZHttYXRyaXh9fSJdLFsyLDEsIlxcYm94ZWR7XFxiZWdpbnttYXRyaXh9XFx0ZXh0e2EgZ2FzLXBvd2VyZWR9IFxcXFwgXFx0ZXh0e3ZlaGljbGV9XFxlbmR7bWF0cml4fX0iXSxbMiwzLCJcXGJveGVke1xcdGV4dHtnYXN9fSJdLFswLDAsIlxcYm94ZWR7XFx0ZXh0e2EgYmljeWNsZX19Il0sWzAsMSwiXFxib3hlZHtcXGJlZ2lue21hdHJpeH1cXHRleHR7YSBnYXMtcG93ZXJlZH0gXFxcXCBcXHRleHR7cGFzc2VuZ2VyIGNhcn1cXGVuZHttYXRyaXh9fSJdLFsyLDEsIlxcdGV4dHtpc30iXSxbMCwxLCJcXHRleHR7aGFzIGFzIHBvd2VyIHNvdXJjZX0iXSxbMywwLCJcXHRleHR7aXN9Il0sWzMsMl0sWzQsMCwiXFx0ZXh0e2lzfSIsMl0sWzUsMSwiXFx0ZXh0e2lzfSIsMl0sWzYsMywiXFx0ZXh0e2lzfSJdLFs3LDQsIlxcdGV4dHtpc30iXSxbNCw1XV0=
\begin{equation}\label{fig:olog2}
\begin{tikzcd}
	{\boxed{\text{a bicycle}}} &&& {\boxed{\begin{matrix}\text{a human-powered} \\ \text{vehicle}\end{matrix}}} && {\boxed{\begin{matrix}\text{a transport} \\ \text{vehicle}\end{matrix}}} \\
	{\boxed{\begin{matrix}\text{a gas-powered} \\ \text{passenger car}\end{matrix}}} && {\boxed{\begin{matrix}\text{a gas-powered} \\ \text{vehicle}\end{matrix}}} \\
	&&& {\boxed{\text{human power}}} && {\boxed{\begin{matrix}\text{a type of} \\ \text{power source}\end{matrix}}} \\
	&& {\boxed{\text{gas}}}
	\arrow["{\text{is}}", from=3-4, to=3-6]
	\arrow["{\text{has as power source}}", from=1-6, to=3-6]
	\arrow["{\text{is}}", from=1-4, to=1-6]
	\arrow[from=1-4, to=3-4]
	\arrow["{\text{is}}"', from=2-3, to=1-6]
	\arrow["{\text{is}}"', from=4-3, to=3-6]
	\arrow["{\text{is}}", from=1-1, to=1-4]
	\arrow["{\text{is}}", from=2-1, to=2-3]
	\arrow[from=2-3, to=4-3]
\end{tikzcd}\end{equation}

The olog \eqref{fig:olog2}  `recovers' Question 2 in the multiple-choice classification scheme from the start of this subsection.  To find the answer when we apply  Question 2 to the concept `a bicycle', for example, we look among  types $B$ that correspond to the upper-left vertex of a fiber product diagram of the form in  \eqref{fig:olog8}, where $A$ is a type corresponding to a specific example of a power source, and find the particular $B$ such that there is an injection $\dashv$is$\vdash$ pointing from $\lceil$a bicycle$\rceil$ to $B$.

\begin{comment}
\begin{figure}[h]
    \centering
    \includegraphics[scale=0.6]{p8b.jpg}
    \caption{}
    \label{fig:olog8}
\end{figure}
\end{comment}

% https://q.uiver.app/#q=WzAsNCxbMiwwLCJcXGJveGVke1xcYmVnaW57bWF0cml4fVxcdGV4dHthIHRyYW5zcG9ydH0gXFxcXCBcXHRleHR7dmVoaWNsZX1cXGVuZHttYXRyaXh9fSJdLFsyLDIsIlxcYm94ZWR7XFxiZWdpbnttYXRyaXh9XFx0ZXh0e2EgdHlwZSBvZn0gXFxcXCBcXHRleHR7cG93ZXIgc291cmNlfVxcZW5ke21hdHJpeH19Il0sWzAsMCwiQiJdLFswLDIsIkEiXSxbMywxLCJcXHRleHR7aXN9Il0sWzIsMF0sWzAsMSwiXFx0ZXh0e2hhcyBhcyBwb3dlciBzb3VyY2V9Il0sWzIsM11d
\begin{equation}\label{fig:olog8}
\begin{tikzcd}
	B && {\boxed{\begin{matrix}\text{a transport} \\ \text{vehicle}\end{matrix}}} \\
	\\
	A && {\boxed{\begin{matrix}\text{a type of} \\ \text{power source}\end{matrix}}}
	\arrow["{\text{is}}", from=3-1, to=3-3]
	\arrow[from=1-1, to=1-3]
	\arrow["{\text{has as power source}}", from=1-3, to=3-3]
	\arrow[from=1-1, to=3-1]
\end{tikzcd}\end{equation}

\paragraph[Populating an olog]\label{para:tipspopulateolog} The examples in Section \ref{para:classifytoolog} cover some basic principles for populating an olog for applications in a given context:
\begin{enumerate}
  \item Introduce types that correspond to `seed' concepts that are relevant to the given context.  For example, in the case of the olog in  \eqref{fig:olog2}, we can begin with $\lceil$a transport vehicle$\rceil$, $\lceil$a type of power source$\rceil$, $\lceil$gas$\rceil$, and so on.
  \item Introduce aspects that connect different seed concepts.  In  \eqref{fig:olog2}, this means introducing the function $\dashv$has as power source$\vdash$ and the two $\dashv$is$\vdash$ arrows from $\lceil$gas$\rceil$ and $\lceil$human power$\rceil$.
  \item Perform categorical operations such as  fiber products to generate more complicated types in the olog; these operations also come with natural arrows such as projections.  In the case of   \eqref{fig:olog2}, this entails constructing the two fiber products  within it.  These fiber products generate the new types $\lceil$a human-powered vehicle$\rceil$ and $\lceil$a gas-powered vehicle$\rceil$; we can then connect them with the types $\lceil$a bicycle$\rceil$ and $\lceil$a gas-powered passenger car$\rceil$ with respective $\dashv$is$\vdash$ arrows.
\end{enumerate}

\section{Distance between  concepts in an olog}\label{sec:quantifysim-1} 

Once we have an olog that includes the relevant concepts in a particular context or application, we can start using the olog to define distances between pairs of concepts, thereby  quantifying the similarity or `degree of analogy' of different concepts.  The idea is simple: every category has an underlying graph, and there are established methods for defining a notion of distance between vertices in a graph (e.g.\ see \cite{wills2020metrics} and  \cite[D41]{handGT}).  

\begin{defn}[shortest-distance metric on an olog]\label{def:sdm}
Given an olog $O$, we can first consider the underlying graph, which is a directed graph.  Suppose we forget the directions of the arrows and consider the associated undirected graph $G$.  Let $V$ be the set of vertices of $G$, and $A$ the set of edges in $G$.  Recall that a path from one vertex $x$ to another vertex $y$ is a sequence of edges $e_1, \cdots, e_m$, for some positive integer $m$, that begins at $x$ and ends at $y$.  Let us assume $G$ is a connected graph with a finite number vertices and edges.  Then for any function $c : A \to \mathbb{R}_{> 0}$, we can define a new function $d : V \times V \to \mathbb{R}_{\geq 0}$ via the formula
\[
d(x,y) = \min \Bigl\{ \sum_{i=1}^n c(e_i) : \text{ there is a path $e_1, \cdots, e_n$ from $x$ to $y$ in $G$}\Bigr\}.
\]
It is easy to see that the function $d$ satisfies the requirements of a metric on the set $V$, thus giving us a notion of `distance' between vertices of the graph $G$, and hence a notion of distance between types (concepts) in the olog $O$.  If the olog $O$ has an underlying graph that is disconnected, we can simply enlarge the codomain of $d$ to $\mathbb{R}_{\geq 0} \cup \{\infty\}$ and define $d(x,y)$ to be $\infty$ when $x,y$ lie in distinct disconnected components of the graph. 
\end{defn}

\begin{eg}
Consider the olog   \eqref{fig:olog2}.  If we assume the function $c$ in Definition \ref{def:sdm} assigns the value $1$ to every edge in the underlying undirected graph of this olog, then with respect to the shortest-distance metric,  the distance between the  concepts `a bicycle' and `a gas-powered passenger car' would be 4, whereas the distance between the concepts `a human-powered vehicle' and `a gas-powered vehicle' would  be 2.  
\end{eg}

Under the shortest-distance metric, the  distance between two concepts depends on the function $c$, and hence the specific olog in use.  If the olog   \eqref{fig:olog2} contains more types and aspects between $\lceil$a bicycle$\rceil$ and $\lceil$a human-powered vehicle$\rceil$, for instance, and $c$ still assigns the value $1$ to every edge, then the distance between the  concepts `a bicycle' and `a gas-powered passenger car' would be greater than 3.   This is not unreasonable since, even for a person, whether or not two concepts are similar depends on the particular context, and also on the amount of knowledge the person has.

\paragraph[Using ologs to compare relations]\label{sec:comparingrels}  Concepts that seem more intangible at first glance - such as  relations among different entities - may also be represented as types in an olog.  Once two concepts are represented by types in the same olog, we can use  ideas from Section \ref{sec:quantifysim-1}  to define a notion of distance between two relations.

The olog in Figure \ref{fig:olog1} already contains an example: we can represent the relation of ownership between a car and a person as the type $\lceil$a pair $(p,c)$ where person $p$ owns car $c\rceil$.

Between two entities, such a person $p$ and a car $c$, there may be different relations that one can speak of.  For example, if a person $p$ owns a car $c$, then the two entities are related by an `ownership' relation.  If a different person $p'$ owns a different car $c'$, then $p'$ and $c'$ are related by the same relation.  On the other hand, if the person $p$ has access to the car $c'$ (e.g.\ $p$ leases the car $c$ but does not own it), then we can say $p$ and $c'$ are related by an `access' relation, which is different from the `ownership' relation.  In analogical reasoning, it is important to be able to compare different relations between the same entities.  %As long as we are able to construct types in an olog that act as proxies for these relational concepts, we can define a notion of distance between relations using  ideas in Section \ref{sec:quantifysim-1}.

We describe here a systematic way to construct types in an olog that represent relations between entities.  Suppose $\thicksim$ is a relation between two entities, and we write $x \thicksim y$ to represent `$x$ is related to $y$ via the relation $\thicksim$'.  Then we can always construct the  types and aspects as in  \eqref{fig:olog32}.
\begin{comment}
\begin{figure}[h]
    \centering
    \includegraphics[scale=0.41]{p32c.jpg}
    \caption{}
    \label{fig:olog32}
\end{figure}
\end{comment}

\begin{equation}\label{fig:olog32}
\begin{tikzcd}
	& {P_\thicksim: \boxed{\begin{matrix}\text{a pair }(x,y)\text{ where} \\ x \text{ is an entity of kind }K_1\\ y \text{ is an entity of kind }K_2\\ \text{and }x \thicksim y\end{matrix}}} \\
	\\
	& {P: \boxed{\begin{matrix}\text{a pair }(x,y)\text{ where} \\ x \text{ is an entity of kind }K_1\\ y \text{ is an entity of kind }K_2\end{matrix}}} \\
	{\boxed{\begin{matrix} \text{an entity of kind }K_1\end{matrix}}} && {\boxed{\begin{matrix} \text{an entity of kind }K_2\end{matrix}}}
	\arrow["{p_1}"', from=3-2, to=4-1]
	\arrow["{p_2}", from=3-2, to=4-3]
	\arrow["{\text{is}}", from=1-2, to=3-2]
\end{tikzcd}\end{equation}

Here, $p_1$ and $p_2$ refers to projections from a pair $(x,y)$ to its first argument $x$ and second argument $y$, respectively.  We can then use the type $P_\thicksim$ as a representation for the relation $\thicksim$ between an entity of kind $K_1$ and an entity of kind $K_2$.

\begin{comment}
elements in a set $S_1$ and those in another set $S_2$.  For $s_1 \in S_1$ and $s_2 \in S_2$, we will write $s_1 \thicksim s_2$ to denote that $s_1$ is related to $s_2$.  (e.g.\ If `$\thicksim$' denotes the relation `has access to', then $s_1 \thicksim s_2$ means  `$s_1$ has access to $s_2$').  [give a reference for the mathematical definition of a relation?]  Then we can take the set [is a type always a set?]
\[
  (S_1 \times S_2) /\thicksim \, \,  := \{ (s_1, s_2) \in S_1 \times S_2 : s_1 \thicksim s_2\}
\]
as a proxy for the relation $\thicksim$.  Such a set always fits in the following diagram
\begin{equation}
\xymatrix{
 & (S_1 \times S_2)/\thicksim \ar[d]^i & \\
 & S_1 \times S_2  \ar[dl]_{p_1} \ar[dr]^{p_2} & \\
 S_1 & & S_2
}
\end{equation}
where $i$ is the  injection 
\begin{align*}
 i : (S_1 \times S_2)/\thicksim &\to S_1 \times S_2 \\
  (x_1,x_2) &\mapsto (x_1,x_2)
\end{align*}
 and $p_1, p_2$ are the canonical projections
 \begin{align*}
p_i : S_1 \times S_2 &\to S_i \\
 (x_1, x_2) &\mapsto x_i  \text{\qquad for $i=1,2$}.
 \end{align*}
 \end{comment}

Using fiber products, we can now construct types that represent various kinds of  relations in an olog and understand how they are related to one another.  %We can then make sense of the distance between any pair of relations $\thicksim_1$ and $\thicksim_2$, even when the entities that $\thicksim_1$ is meant to relate are different from those for  $\thicksim_2$.

%[define direct product in an olog]

\begin{eg}
Consider the olog in  \eqref{fig:olog10}.  To construct this olog, we first begin with the types $A, B$, and $A'$.  We then construct $D$ as the direct product of $B$ with itself, and define $p_1, p_2$ as the projections onto the first and the second arguments, respectively.   For convenience, given any two types $P, Q$ in this olog, we will write $PQ$ to denote the unique aspect in the olog from $P$ to $Q$.  Then we  construct $C''$ as the fiber product of $C'D$ and $CD$.

Now, if we add the type $F$  to the olog as a proxy for the relation `ownership' between two entities, we can further construct $E$ as the fiber product of $FD$ and $CD$, $E'$ as the fiber product of $FD$ and $C'D$, and $E''$ as the fiber product of $EF$ and $E'F$.  If we add another type $\widetilde{F}$ as a proxy for the relation `has access to' between two entities, we can similarly construct $\widetilde{E}$ as the fiber product of $\widetilde{F}D$ and $CD$, $\widetilde{E}'$ as the fiber product of $\widetilde{F}D$ and $C'D$, and $\widetilde{E}''$ as the fiber product of $\widetilde{E}\widetilde{F}$ and $\widetilde{E}'\widetilde{F}$.

Now, for the underlying undirected graph of the olog in  \eqref{fig:olog10},    if we use a function $c$ that assigns the value $1$ to every edge, then with respect to the shortest-distance metric (Definition \ref{def:sdm}),  the distance between the concepts `owns' and `has access to' would then be equal to 2 (attained by the path $FD$ followed by $D\widetilde{F}$.  The distance between the concept `a person owning a building' (represented by $E''$) and the concept `a person having access to a building' (represented by $\widetilde{E}''$) would also be 2, attained by the path $E''C''$ followed by $C''\widetilde{E}''$.
\end{eg}

One can imagine that, by using other types representing  relations other than $F$ and $\widetilde{F}$, new types representing other relations among different kinds of entities can be added to the olog.  As a result, we will be able to define a distance between any two relations as long as they appear as types in the same olog.

\begin{comment}
\begin{figure}[h]
    \centering
    \includegraphics[scale=0.41]{p10b.jpg}
    \caption{}
    \label{fig:olog10}
\end{figure}
\end{comment}

\begin{equation}\label{fig:olog10}
\begin{tikzcd}[scale cd=0.4]
	&&& {\widetilde{E}''\, \, \boxed{\begin{matrix} \text{a pair }(e, f)\text{ where} \\ e \text{ is a person,}\\    f\text{ is a building, and} \\ e\text{ has access to }f\end{matrix}}} &&& {\widetilde{E}'\, \, \boxed{\begin{matrix} \text{a pair }(e, f)\text{ where} \\ e \text{ is an entity,}\\    f\text{ is a building, and} \\ e\text{ has access to }f\end{matrix}}} \\
	& {\widetilde{E} \,\, \boxed{\begin{matrix} \text{a pair }(e, f)\text{ where} \\  e \text{ is a person,} \\ f\text{ is an entity, and} \\ e \text{ has access to }f\end{matrix}}} &&& {\widetilde{F} \,\,\boxed{\begin{matrix} \text{a pair }(e, f)\text{ where} \\  e, f\text{ are entities, and} \\ e \text{ has access to }f\end{matrix}}} \\
	&& {E''\, \, \boxed{\begin{matrix} \text{a pair }(e, f)\text{ where} \\ e \text{ is a person,}\\    f\text{ is a building, and} \\ e\text{ owns }f\end{matrix}}} &&& {E'\, \, \boxed{\begin{matrix} \text{a pair }(e, f)\text{ where} \\ e \text{ is an entity,}\\    f\text{ is a building, and} \\ e\text{ owns }f\end{matrix}}} \\
	{E\,\, \boxed{\begin{matrix} \text{a pair }(e, f)\text{ where} \\  e \text{ is a person,} \\ f\text{ is an entity, and} \\ e \text{ owns }f\end{matrix}}} &&& {F\,\,\boxed{\begin{matrix} \text{a pair }(e, f)\text{ where} \\  e, f\text{ are entities, and} \\ e \text{ owns }f\end{matrix}}} \\
	&& {C''\, \, \boxed{\begin{matrix} \text{a pair }(e, f)\text{ where} \\ e \text{ is a person, and}\\    f\text{ is a building}\end{matrix}}} &&& {C'\, \, \boxed{\begin{matrix} \text{a pair }(e, f)\text{ where} \\ e \text{ is an entity, and}\\    f\text{ is a building}\end{matrix}}} \\
	{C\, \, \boxed{\begin{matrix} \text{a pair }(e, f)\text{ where} \\ e \text{ is a person, and}\\    f\text{ is an entity}\end{matrix}}} &&& {D\,\, \boxed{\begin{matrix} \text{a pair }(e, f)\text{ where} \\  e, f\text{ are entities}\end{matrix}}} && {A'\,\,\fbox{a building}} \\
	{A \, \,  \fbox{a person}} &&& {B\, \, \fbox{an entity}}
	\arrow["{\text{is}}"{pos=0.7}, from=4-4, to=6-4]
	\arrow["{p_1}"', shift right=3, from=6-4, to=7-4]
	\arrow["{\text{is}}", from=7-1, to=7-4]
	\arrow["{p_2}", shift left=3, from=6-4, to=7-4]
	\arrow[from=6-1, to=7-1]
	\arrow["{\text{is}}", from=6-1, to=6-4]
	\arrow["{\text{is}}"{pos=0.2}, from=4-1, to=4-4]
	\arrow[from=4-1, to=6-1]
	\arrow["{\text{is}}", from=6-6, to=7-4]
	\arrow[from=5-6, to=6-6]
	\arrow["{\text{is}}"', from=5-6, to=6-4]
	\arrow[from=3-6, to=5-6]
	\arrow["{\text{is}}"', from=3-6, to=4-4]
	\arrow["{\text{is}}"{pos=0.3}, from=5-3, to=5-6]
	\arrow["{\text{is}}"', from=5-3, to=6-1]
	\arrow["{\text{is}}", from=3-3, to=3-6]
	\arrow["{\text{is}}"', from=3-3, to=4-1]
	\arrow["{\text{is}}"{pos=0.7}, from=3-3, to=5-3]
	\arrow["{\text{is}}"'{pos=0.1}, from=2-5, to=6-4]
	\arrow["{\text{is}}"'{pos=0.1}, from=2-2, to=6-1]
	\arrow["{\text{is}}"'{pos=0.1}, from=1-4, to=5-3]
	\arrow["{\text{is}}"'{pos=0.1}, from=1-7, to=5-6]
	\arrow["{\text{is}}"', from=1-4, to=2-2]
	\arrow["{\text{is}}", from=2-2, to=2-5]
	\arrow["{\text{is}}", from=1-4, to=1-7]
	\arrow["{\text{is}}"{description}, from=1-7, to=2-5]
\end{tikzcd}\end{equation}

\section{Using wiring diagrams to represent processes}\label{sec:WDforprocesses}

In Section \ref{sec:distbtwconcepts-start}, we saw that fiber product from category theory can be used to build complex concepts from simple concepts.  In this section, we introduce the idea of wiring diagrams, which will allow us to represent processes that occur over time.  We will define a wiring diagram as a graph decorated with labels.  Note that the idea of wiring diagrams has been used in engineering for many years, and there have been various mathematical approaches to using  wiring diagrams  to represent systems or processes \cite{SR-WDdisc,spivak2013operad,VSL}.  In this article, we focus on a more elementary approach and leave a more operadic approach as taken in the aforementioned articles  to future work.

\begin{defn}
    A (\emph{directed}) \emph{graph} is a quadruple $G = (V, A, s, t)$ where
    \begin{itemize}
        \item $V$ is a set, the elements of which are called \emph{vertices};
        \item $A$ is a set, the elements of which are called \emph{arrows}, or \emph{edges}; 
        \item $s : A \to V$ is a function, called the \emph{source function};
        \item $t : A \to V$ is a function, called the \emph{target function}.
    \end{itemize}
\end{defn}
  Given an arrow $a$ in a graph, we often draw it as an arrow pointing from $s(a)$ to $t(a)$: 
\[
s(a) \overset{a}{\longrightarrow} t(a).
\]
That is, the functions $s$ and $t$ indicate where an arrow starts and end, respectively.

For example, Figure  \ref{fig:olog12} is a graph with 4 vertices (labelled $A, B, C, D$) and 5 arrows (labelled $a, b, c, d, e$), with the functions $s, t$ given by $s(a)=A=s(b), t(a)=t(b)=s(c)=s(e)=B, t(c)=s(d)=C$, and $t(d)=t(e)=D$.

\begin{figure}[h]
    \centering
    \includegraphics[scale=0.5]{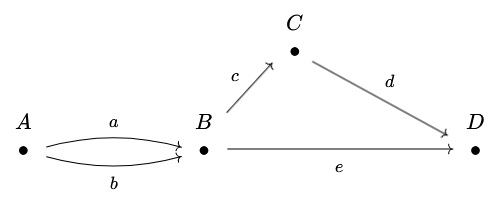}
    \caption{}
    \label{fig:olog12}
\end{figure}

A wiring diagram in this article will be a graph where, to each vertex, we attach a `state vector' that represents the values of certain parameters that correspond to sensors.  To make this precise, we first define the notion of sensing functions.

\begin{defn}[sensing function]
    A \emph{sensing function} $F$ associated to a sensor is a function whose domain $D_F$ is the set of all the things the sensor can be applied to, and whose codomain $C_F$ is the set of all possible outputs from the sensor.  That is, for any $x \in D$, $F(x)$ is the output given by the sensor.
\end{defn}

We  also allow sensors to take on broader meanings, and use the word sensor  to refer to any device or algorithm that observes the environment and gives an output.

\begin{eg} (a) For  a speed sensor next to a motorway,  the corresponding sensing function $F$ can be defined on the set $D_F$ of all cars that move past the sensor, while $C_F$ is the set of all nonnegative integers.  Then, at any point in time, $F$ would give an integer output representing the speed of a vehicle currently in front of the sensor (in miles per hour, rounded to the nearest nonnegative integer); if there is no vehicle in front of the sensor, the sensor gives an output of $0$.

(b) For a motion sensor that detects whether there is any movement inside a particular room,  we can take the domain $D_F$ of  the corresponding sensing function $F$ to be the singleton set $\{\bullet\}$,    and take the codomain $C_F$ of $F$ to be $\{0,1\}$.  This way, at any point  in time, the sensing function $F$ would give the value $0$ if  no movement is detected inside the room, and give the value $1$ if movement is detected.  

(c) For a sensor that tracks the blood oxygen level of a particular person, the corresponding sensing function $F$ can be defined as a function from a singleton set $\{\bullet\}$ to the real interval $[0,100]$ so that, at any point in time, the number $F(\bullet)$ represents the blood oxygen level of that person (in percentage) at that  time.

(d) For any two cities $x, y$ in the world, we can define a sensing function $F_{x,y} : \{\bullet\} \to \{0,1\}$, which depends on $x$ and $y$, such that $F_{x,y}(\bullet)$ takes on the value $1$ (resp.\ $0$) if $x$ and $y$ are sister cities (resp.\ are not sister cities).  The value of $F_{x,y}$ would depend on the particular time when the measurement is taken.  In practice, this sensing function can be constructed using an algorithm that crawls through the public web or government databases to determine the current status of sister city agreements between the two cities.
\end{eg}

We now define a wiring diagram as follows.

\begin{defn}[wiring diagram]\label{def:WD}
    A \emph{wiring diagram} (\emph{WD}) is a quintuple 
    \[
    (V, A, s, t, \mathscr{L}_V)
    \]
    satisfying the following conditions.
    \begin{enumerate}
        \item[WD0.] $G=(V,A,s,t)$ is a finite directed graph, called the \emph{underlying graph} of the wiring diagram.  We will refer to elements of $V$ as \emph{vertices} or \emph{states}, and refer to elements of $A$ as \emph{arrows} or \emph{wires}. 
           
        \item[WD1.] $\mathscr{L}_V$ is an indexed set $\{ L_v\}_{v \in V}$ such that each $L_v$ is a  set of triples
        \[
          L_v = \{ (F_i, x_i, y_i) : 1 \leq i\leq m_v\}
        \]
        where $m_v$ is a nonnegative integer depending on $v$, and where each $F_i$ is a sensing function, with $x_i$ in the domain of $F_i$ and $y_i$ in the codomain of $F_i$. We allow $L_v$ to be the empty set.
        \item[WD2.] There is a labelling of the vertices, given by a function $f : V \to \{1, 2, \cdots, n\}$ where $n$ is the number of elements in $V$, such that for each $a \in A$, we have $f(s(a)) < f(t(a))$.
    \end{enumerate}
\end{defn}
In our formal definition above, we do not \textit{a priori} require $f$ to be a bijective function.   By Lemma \ref{lem:WDgraphsRaxal} below, however, there always exists a bijection $f$ that satisfies  WD2.  By abuse of notation, we will refer to  $L_v$ as the \emph{state vector} at the vertex $v$, and refer to an element of $L_v$ as a \emph{label}.  We will refer to a graph that arises as the underlying graph of a wiring diagram as a \emph{wiring diagram graph} or a \emph{WD graph}.  Note that a finite directed graph is a WD graph if and only if it satisfies WD2.

\medskip
\noindent
\textbf{How to read a wiring diagram.}   
\begin{itemize}
    \item For each vertex $v$, the state vector $L_v$  specifies the values of various parameters that must be achieved at a particular point in time. 
    \item Each arrow $a$ represents the requirement that the state vector $L_{s(a)}$ is achieved \emph{before} the state vector $L_{t(a)}$ is achieved.
\end{itemize} 

Condition WD2  implies that we can always arrange the vertices of a wiring diagram in a way so that every arrow points from left to right.  A wiring diagram then represents a process where, as we read the diagram from left to right, specific readings of parameter values occur.

\paragraph[Properties of  WD graphs]  We list some basic properties of WD graphs in this subsection.

We adopt the following definitions for  a directed graph $G=(V,A,s,t)$.
\begin{itemize}
    \item A loop is an arrow that points from a vertex to itself, i.e.\ $a \in A$ such that $t(a)=s(a)$.
    \item  A path of length $n$ is a sequence of arrows $a_1, \cdots, a_n$, where $n$ is a positive integer, such that $t(a_i)=s(a_{i+1})$ for all $1 \leq i \leq n-1$, and none of the $a_i$ are loops.
    \item  An  oriented cycle is a path of length more than 1 that begins and ends at the same vertex.
\end{itemize}  
Note that an arrow is a path of length 1.

\begin{lem}\label{lem:WDgraph-1}
Let $G=(V,A,s,t)$ be a WD graph.  Then
\begin{itemize}
    \item[(i)] $G$ contains no loops.
    \item[(ii)] $G$ contains no oriented cycles.
\end{itemize}
\end{lem}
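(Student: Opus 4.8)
The plan is to exploit the labelling function $f : V \to \{1, \ldots, n\}$ supplied by WD2, whose defining property is that $f(s(a)) < f(t(a))$ for every arrow $a \in A$. The entire proof rests on the observation that this makes $f$ strictly increase as one traverses any single arrow, and hence strictly increase along any directed path; since $f$ takes values in the totally ordered set $\{1, \ldots, n\}$, no path can return to its starting vertex.

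For part (i), I would argue by contradiction. Suppose $a \in A$ were a loop, so that $s(a) = t(a)$. Applying WD2 to $a$ gives $f(s(a)) < f(t(a)) = f(s(a))$, which is impossible because the order on $\{1, \ldots, n\}$ is strict. Hence $G$ has no loops.

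For part (ii), I would again argue by contradiction, using the definition of an oriented cycle as a path $a_1, \ldots, a_n$ of length $n > 1$ with $t(a_i) = s(a_{i+1})$ for $1 \le i \le n-1$ and $s(a_1) = t(a_n)$. Chaining the strict inequalities coming from WD2 (one for each arrow $a_i$) together with the path and cycle conditions (which supply the intervening equalities) yields
\[
f(s(a_1)) < f(t(a_1)) = f(s(a_2)) < \cdots = f(s(a_n)) < f(t(a_n)) = f(s(a_1)),
\]
so that $f(s(a_1)) < f(s(a_1))$, a contradiction. Therefore $G$ contains no oriented cycles.

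There is no genuine obstacle here: WD2 is precisely a topological ordering of the vertices, and the statement is the standard fact that a graph admitting a topological ordering is acyclic. The only points requiring minor care are bookkeeping --- ensuring the alternating chain of strict inequalities and equalities closes up correctly at both ends using $s(a_1) = t(a_n)$ --- and observing that part (i) must be treated separately from part (ii), since the definition of a path (and hence of an oriented cycle) explicitly excludes loops, so loops are not subsumed as length-one cycles.
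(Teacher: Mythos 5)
Your proof is correct and follows essentially the same route as the paper's: both parts exploit the WD2 labelling $f$, with (i) following from the strict inequality $f(s(a))<f(t(a))$ forcing $s(a)\neq t(a)$, and (ii) from chaining the strict inequalities and equalities around a purported cycle to obtain $f(s(a_1))<f(s(a_1))$. The only cosmetic difference is that you phrase (i) as a contradiction while the paper states it directly; the substance is identical.
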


\begin{proof}
Let $f$ be a function associated to $G$ in WD2.  

(i) Given any arrow $a \in A$,  we have $f(s(a))<f(t(a))$ which implies $s(a)$ and $t(a)$ must be distinct vertices, and so $a$ cannot be a loop.

(ii) Suppose $G$ contains an oriented cycle, i.e.\ a path formed by the concatenation of arrows $a_1, a_2, \cdots, a_k, a_{k+1}$ where  $t(a_i)=s(a_{i+1})$ for $1 \leq i \leq k$ while $t(a_{k+1})=s(a_1)$.  Then we have 
\[
f(s(a_1))<f(t(a_1))=f(s(a_2))< \cdots  < f(t(a_k))=f(s(a_{k+1}))<f(t(a_{k+1}))=f(s(a_1))
\]
whic is a contradiction.
\end{proof}

%\begin{defn}
%    We say  a WD graph is \emph{axal} if the function $f$ in WD2 can be taken to be a bijection. 
%\end{defn}

%That is, a finite directed graph with $n$ vertices is an axal WD graph if and only if the vertices can be labelled as $v_1, \cdots, v_n$ such that every arrow points from $v_i$ to $v_j$ for some $1 \leq i < j \leq n$ (so all the vertices lie on an axis, and all the arrows point in the positive direction of the axis).  The lemma below shows that every WD graph is axal.

\begin{lem}\label{lem:WDgraphsRaxal}
Let $G=(V,A,s,t)$ be a WD graph. Then there exists a bijective function $\widetilde{f} : V \to \{1, 2, \cdots, n\}$ where $n=|V|$ such that for every arrow $a \in A$, we have $\widetilde{f}(s(a))< \widetilde{f}(t(a))$.  %That is, $G$ is axal.
\end{lem}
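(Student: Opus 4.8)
The plan is to take the order-preserving labelling $f : V \to \{1, \dots, n\}$ furnished by WD2 and refine it into a bijection without destroying the defining inequality. The obstruction is precisely that $f$ need not be injective (and hence, for a set of size $n$, not surjective onto $\{1, \dots, n\}$ either): several vertices may carry a common $f$-value. The idea is to break these ties by means of an arbitrary enumeration of $V$, producing a strict total order on $V$ whose rank function will be the desired $\widetilde{f}$.

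First I would fix any bijection $g : V \to \{1, \dots, n\}$ (one exists since $|V| = n$) to serve as a tie-breaker, and define a relation $\prec$ on $V$ by declaring $u \prec w$ exactly when either $f(u) < f(w)$, or $f(u) = f(w)$ and $g(u) < g(w)$. This is the lexicographic order on the pairs $(f(v), g(v))$; since $g$ is injective these pairs are pairwise distinct, so $\prec$ is a strict total order on the finite set $V$. I would then set $\widetilde{f}(v) := 1 + \#\{ u \in V : u \prec v \}$, the rank of $v$ in this total order. Because $\prec$ strictly totally orders a set of $n$ elements, the rank function is automatically a bijection $V \to \{1, \dots, n\}$, which disposes of the surjectivity and injectivity requirements at once.

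It then remains to verify the arrow condition. Given any $a \in A$, WD2 supplies $f(s(a)) < f(t(a))$, so the primary key already separates $s(a)$ from $t(a)$ and hence $s(a) \prec t(a)$; monotonicity of the rank function then yields $\widetilde{f}(s(a)) < \widetilde{f}(t(a))$, as required.

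The one point that needs care — and what I regard as the crux — is that the tie-breaking step can never conflict with an arrow constraint. This is exactly where WD2 is used in its strict form: since $f(s(a)) < f(t(a))$ is a \emph{strict} inequality, no arrow of $G$ ever joins two vertices of equal $f$-value, so ties are only ever resolved between vertices that are not related by any arrow, and the arbitrary choice of $g$ is harmless. (Alternatively one could argue directly by topological sorting: Lemma \ref{lem:WDgraph-1}(ii) shows $G$ is acyclic, a finite directed acyclic graph always possesses a source vertex, and one repeatedly assigns the next smallest index to a source and deletes it; but the rank-refinement above is shorter, since the labelling $f$ is already at hand.)
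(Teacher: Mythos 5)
Your proof is correct and takes essentially the same route as the paper's: both start from the (possibly non-injective) labelling $f$ supplied by WD2, break ties within each fiber $f^{-1}(i)$ by an arbitrary choice, and then pass to the induced order-preserving relabelling onto $\{1,\dots,n\}$. The paper implements the tie-breaking by spreading each fiber into fractional values $i + \tfrac{1}{2m}j$ and post-composing with the unique order-preserving bijection onto $\{1,\dots,n\}$, which is precisely the rank function of your lexicographic order $(f(v),g(v))$ in different notation.
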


\begin{proof}
    Let $f : V \to \{1, \cdots, n\}$ be a function associated to $G$ as in WD2.  Without loss of generality, we can assume $\image f = \{1, \cdots, k\}$ for some integer $1 \leq k\leq n$.  For each $i \in \image f$, let $M_i$ denote the preimage of $i$, i.e.\ $M_i = f^{-1}(i)$, and let $m_i = |M_i|$, the number of vertices mapping onto $i$ under $f$. Let $m=\max \{m_i: i \in \image f\}$.

    Now for each $i \in \image f$, let $\widetilde{f}_i$ denote any bijection 
    \[
     M_i \to \left\{ i + \frac{1}{2m}j : 0 \leq j \leq m_i-1\right\} =: R_i.
    \]
    We can then concatenate the $\widetilde{f}_i$ into a single function $\widetilde{f}'$ on $V$, i.e.\ we set
    \begin{align*}
        \widetilde{f}' : V &\to \bigcup_{i\in \image f}R_i \\
        v &\mapsto \widetilde{f}_i(v) \text{\quad if $v \in M_i$}.
    \end{align*}
    Note that $\widetilde{f}'$ is a bijection, and so we can post-compose $\widetilde{f}'$ with a unique order-preserving bijection onto $\{1, \cdots, n\}$ to form a function $\widetilde{f}$.  We claim that $\widetilde{f}$ satisfies WD2.

    Take any arrow $a \in A$.  From the definition of $f$, we have $f(s(a)) < f(t(a))$.  By construction of $\widetilde{f}'$, it follows that \[
    \lfloor \widetilde{f}'(s(a))\rfloor = f(s(a)) < f(t(a)) = \lfloor \widetilde{f}' (t(a)) \rfloor
    \]
    (where $\lfloor - \rfloor$ denotes the floor function on real numbers).  From the construction of $\widetilde{f}'$, this means that $\widetilde{f}'(s(a))$ and $\widetilde{f}'(t(a))$ lie in distinct $R_i$  and   $\widetilde{f}'(s(a)) < \widetilde{f}'(t(a))$, which in turn implies $\widetilde{f}(s(a)) < \widetilde{f}(t(a))$.  That is, $\widetilde{f}$ satisfies WD2.
\end{proof}

\begin{rem}
By Lemma \ref{lem:WDgraphsRaxal}, every WD graph is a directed graph with a linear extension ordering \cite[Section 3.4, D30]{handGT}; as a result, a directed graph is a WD graph (i.e.\ satisfies WD2) if and only if it is a directed acyclic graph (DAG) \cite[Section 3.4, F23]{handGT}.  We will continue to use the term \emph{wiring diagram graph} instead of \emph{directed acyclic graph} in this article, however, to emphasize that we are not merely considering DAGs in this article, but DAGs with extra structures that make them wiring diagrams.
\end{rem}

%\begin{rem}
%Using the language of order theory, given a function $f$ in condition WD2 in the definition of a wiring diagram, we obtain a partial order on the set $V$ of vertices if we declare $x \leq y $ whenever $f(x) < f(y)$. It is well-known that every partial order has a linear extension, and Lemma \ref{lem:WDgraphsRaxal} constructs an explicit linear extension for a given $f$.  The definition of a wiring diagram does not come with a fixed function $f$, however; condition WD2 merely asserts the existence of such a function.
%\end{rem}

\paragraph[State vectors and actions]\label{sec:statesvsactions}  Informally, each state vector $L_v$ in a wiring diagram represents the `status' of relevant parameters, whereas each wire $a$ in a wiring diagram represents a `difference of states', and thus corresponds to an action or event that leads the state vector at $L_{s(a)}$ to become the state vector $L_{t(a)}$.  If we define sensing functions carefully, a single state vector can also indicate the occurrence of an action or an event.

\begin{eg}\label{eg:insidecoffeeshop}
For example, suppose we want to represent the concept ``person $p$ enters coffee shop $s$'' using a wiring diagram.  Consider a sensor tracking the movement of the person $p$, where the sensor gives the output $0$ when  $p$ is outside the coffee shop, and gives the output $1$ when $p$ is inside the coffee shop.  This results in a sensing function $F_1$ with the singleton set $\{\bullet\}$ as the domain and defined by
 \[
    F_1(\bullet)= \begin{cases} 0 &\text{ if $p$ is outside $s$} \\
    1 &\text{ if $p$ is inside $s$}\end{cases}
    \]
at any point in time.  The concept ``person $p$ enters coffee shop $s$'' can then be represented by the wiring diagram with two vertices as in  \eqref{fig:olog14}.

\begin{comment}
\begin{figure}[h]
    \centering
    \includegraphics[scale=0.6]{p14c.jpg}
    \caption{}
    \label{fig:olog14}
\end{figure}
\end{comment}

% https://q.uiver.app/#q=WzAsMixbMCwwLCJcXGJlZ2lue21hdHJpeH1cXGJ1bGxldCBcXFxcIChGXzEsXFxidWxsZXQsMClcXGVuZHttYXRyaXh9Il0sWzIsMCwiXFxiZWdpbnttYXRyaXh9XFxidWxsZXQgXFxcXCAoRl8xLFxcYnVsbGV0LDEpXFxlbmR7bWF0cml4fSJdLFswLDEsIiIsMCx7Im9mZnNldCI6LTJ9XV0=
\begin{equation}\label{fig:olog14}
\begin{tikzcd}
	{\begin{matrix}\bullet \\ (F_1,\bullet,0)\end{matrix}} && {\begin{matrix}\bullet \\ (F_1,\bullet,1)\end{matrix}}
	\arrow[shift left=2, from=1-1, to=1-3]
\end{tikzcd}\end{equation}

In particular, the single wire in this wiring diagram informally represents the act  of  $p$ `entering' the coffee shop $s$.

Another way to represent the concept ``person $p$ enters coffee shop $s$''
is by considering a `numerical derivative' of $F_1$.  Let us define a new sensing function $dF_1 : \{\bullet\} \to \{-1,0,1\}$ given by 
 \[
    dF_1(\bullet)= (\text{current value of }F_1)-(\text{value of }F_1\text{ five seconds ago}).
    \]
Then the occurrence of $p$ entering the coffee shop $s$ would correspond to the moment when the function $dF_1$ registers a value of $1$, and so the act of $p$ entering the coffee shop $s$ can also be represented by the following wiring diagram with a single vertex and no wires

\begin{comment}
\begin{figure}[h]
    \centering
    \includegraphics[scale=0.7]{p20.jpg}
    %\caption{}
    \label{fig:olog20}
\end{figure}
\end{comment}

% https://q.uiver.app/#q=WzAsMSxbMCwwLCJcXGJlZ2lue21hdHJpeH1cXGJ1bGxldCBcXFxcIChkRl8xLFxcYnVsbGV0LDEpXFxlbmR7bWF0cml4fSJdXQ==
\begin{equation*} \label{fig:olog20}
\begin{tikzcd}
	{\begin{matrix}\bullet \\ (dF_1,\bullet,1)\end{matrix}}
 \end{tikzcd}\end{equation*}

\end{eg}

\begin{eg}\label{eg:WDgraphsonsameV}
In \eqref{fig:olog17} there are  three possible underlying graphs for wiring diagrams with four vertices.  In a wiring diagram with underlying graph $G_1$, the state vectors must be achieved in the order of $L_A, L_B, L_C$, and then $L_D$. A situation where such a wiring diagram arises, for example, would be in curriculum planning.  In the curriculum of a gentle introduction to calculus, for example, we can define the state vectors $L_A$ through $L_D$ to represent the following:
\begin{itemize}
    \item $L_A$: A student has learned the definition of continuity.
    \item $L_B$: A student has learned to take limits of functions.
    \item $L_C$: A student has learned the definition of derivative.
    \item $L_D$: A  student has learned to take the derivative of a polynomial function.
\end{itemize}
To formally write these as labels of a wiring diagram in terms of sensing functions, one can use sensing functions that register the scores on tests covering the respective topics.

\begin{comment}
\begin{figure}[h]
    \centering
    \includegraphics[scale=0.65]{p17c.jpg}
    \caption{}
    \label{fig:olog17}
\end{figure}
\end{comment}

\begin{equation}\label{fig:olog17}
\begin{tikzcd}[scale cd=0.8]
&&&&&&&& {\begin{matrix} \bullet \\ B \end{matrix}} \\
	{G_1:} & {\begin{matrix} \bullet \\ A \end{matrix}} & {\begin{matrix} \bullet \\ B \end{matrix}} & {\begin{matrix} \bullet \\ C \end{matrix}} & {\begin{matrix} \bullet \\ D \end{matrix}} && {G_2:} & {\begin{matrix} \bullet \\ A \end{matrix}} && {\begin{matrix} \bullet \\ D \end{matrix}} \\
	&&&&&&&& {\begin{matrix} \bullet \\ C\end{matrix}} \\
	&&& {G_3:} & {\begin{matrix} \bullet \\ A \end{matrix}} && {\begin{matrix} \bullet \\ B \end{matrix}} \\
	&&&& {\begin{matrix} \bullet \\ C \end{matrix}} && {\begin{matrix} \bullet \\ D \end{matrix}}
	\arrow[shift left=3, from=2-2, to=2-3]
	\arrow[shift left=3, from=2-3, to=2-4]
	\arrow[shift left=3, from=2-4, to=2-5]
	\arrow[shift left=2, from=2-8, to=1-9]
	\arrow[shift left=2, from=1-9, to=2-10]
	\arrow[shift left=2, from=2-8, to=3-9]
	\arrow[shift left=2, from=3-9, to=2-10]
	\arrow[shift left=2, from=4-5, to=4-7]
	\arrow[shift left=2, from=4-5, to=5-7]
	\arrow[shift left=2, from=5-5, to=5-7]
\end{tikzcd}\end{equation}

In a wiring diagram with underlying graph $G_2$, the state vector $L_A$ must be achieved first; then $L_B$ and $L_C$ must be achieved, but between them it does not matter whether it is $L_B$ or $L_C$ that is achieved first.  After both $L_B, L_C$ have been achieved, $L_D$ must then be achieved.  Our example in Section \ref{eg:buyingcoffee}  involves a wiring diagram that contains $G_2$ as part of its underlying graph.

In the case of $G_3$, either $L_A$ or $L_C$ must be achieved first, although they can occur independently.  The state vector $L_D$ can only be achieved after $L_A, L_C$ have both been achieved, while $L_B$ can only be achieved after $L_A$ has been achieved.  (Between $L_B$ and $L_D$, there is no requirement as to which should come first.)  A situation where such a wiring diagram  arises is when different teams work on a project continuously in relay.  Suppose for any positive integer $i$, there is a team $T_i$, and that all these teams work on the same project in a factory.  Then a wiring diagram with underlying graph $G_3$ would depict the process of `passing the baton' from one team to the next if we take the state vectors to represent the following for any $i\geq 2$:
\begin{itemize}
    \item $L_A$: $T_{i-1}$ has created instructions for $T_i$.
    \item $L_B$: $T_{i-1}$ has departed the factory.
    \item $L_C$: $T_i$ has arrived at the factory.
    \item $L_D$: $T_i$ has completed  the instructions from $T_{i-1}$ and made new progress on the project.
\end{itemize}
A wiring diagram with underlying graph such as  \eqref{fig:olog18} would then represent the entire relay process.

\begin{comment}
\begin{figure}[h]
    \centering
    \includegraphics[scale=0.6]{p18b.jpg}
    \caption{}
    \label{fig:olog18}
\end{figure}
\end{comment}

\begin{equation}\label{fig:olog18}
\begin{tikzcd}
	\bullet & \bullet \\
	\bullet & \bullet & \bullet & \bullet \\
	&& \bullet & \bullet & \bullet & \bullet \\
	&&&& \bullet & \bullet \\
	&&&&&& \ddots & \bullet & \bullet \\
	&&&&&&& \bullet & \bullet
	\arrow[from=1-1, to=1-2]
	\arrow[from=1-1, to=2-2]
	\arrow[from=2-1, to=2-2]
	\arrow[from=2-2, to=2-3]
	\arrow[from=2-3, to=2-4]
	\arrow[from=2-3, to=3-4]
	\arrow[from=3-3, to=3-4]
	\arrow[from=3-4, to=3-5]
	\arrow[from=3-5, to=3-6]
	\arrow[from=3-5, to=4-6]
	\arrow[from=4-5, to=4-6]
	\arrow[from=5-8, to=5-9]
	\arrow[from=5-8, to=6-9]
	\arrow[from=6-8, to=6-9]
\end{tikzcd}\end{equation}
\end{eg}

\paragraph[Example: buying coffee]\label{eg:buyingcoffee}  Let us build on Example \ref{eg:insidecoffeeshop} and consider the process of ``a person $p$ buying coffee from a shop $s$''.  We can think of this process as comprising  four components:
\begin{itemize}
    \item[(i)] $p$ enters the coffee shop $s$.
    \item[(ii)] $p$ makes payment for coffee.
    \item[(iii)] $p$ receives coffee.
    \item[(iv)] $p$ leaves the coffee shop $s$.
\end{itemize}
Depending on the type of shop, (ii) might occur before (iii), or (iii) might occur before (ii); it is reasonable to assume, however, that in most cases, (i) must occur before both (ii) and (iii), which must occur before (iv).

Next, we describe each of events (i) through (iv) in terms of   sensors.  We can use changes in the value of the sensing function $F_1$ from above to describe (i) and (iv).  To describe the event (ii), consider a sensing function $F_2$  that detects whether a payment has been made by $p$ for coffee (this is something that can be detected as a change in the activity log in the cashier's machine, or in the activity log of person $p$'s payment devices).     That is, we can take $F_2$ to be a function with domain $\{\bullet\}$ such that
    \[
    F_2(\bullet)= \begin{cases} 0 &\text{ if $p$ has not made a new payment for coffee} \\
    1 &\text{ if $p$ has  made a new payment for coffee}\end{cases}.
    \]

To describe the event (iii), we can define a sensing function $F_3$ that detects whether $p$ is holding coffee (such as from an image recognition algorithm), i.e.\   $F_3$ has domain $\{\bullet\}$ and is given by 
    \[
    F_3(\bullet)= \begin{cases} 0 &\text{ if $p$ is not holding any coffee} \\
    1 &\text{ if $p$ is holding coffee}\end{cases}.
    \]

The process of ``a person $p$ buying coffee from a coffee shop $s$'' can now be represented by the wiring diagram in  \eqref{fig:olog15}.

\begin{comment}
\begin{figure}[h]
    \centering
    \includegraphics[scale=0.6]{p15c.jpg}
    \caption{}
    \label{fig:olog15}
\end{figure}
\end{comment}

\begin{equation}\label{fig:olog15}
\begin{tikzcd}
	&&&& {\begin{matrix} \bullet \\  (F_2,\bullet,1)  \\ \text{}\\\text{} \end{matrix}} \\
	{\begin{matrix} \bullet \\ (F_1,\bullet,0) \\ (F_2,\bullet,0) \\ (F_3,\bullet,0)  \end{matrix}} && {\begin{matrix} \bullet \\ (F_1,\bullet,1) \\ \text{}\\\text{}  \end{matrix}} &&&& {\begin{matrix} \bullet \\ (F_1,\bullet,0)\\ \text{}\\\text{}  \end{matrix}} \\
	&&&& {\begin{matrix} \bullet \\  (F_3,\bullet,1) \\ \text{}\\\text{}  \end{matrix}}
	\arrow[shift left=5, from=2-1, to=2-3]
	\arrow[shift left=5, from=2-3, to=1-5]
	\arrow[shift left=5, from=2-3, to=3-5]
	\arrow[shift left=5, from=1-5, to=2-7]
	\arrow[shift left=5, from=3-5, to=2-7]
\end{tikzcd}\end{equation}

\begin{comment}
To avoid cluttering wiring diagrams with recurring labels, given an arrow $a$, we can choose to suppress  a label at $t(a)$ if the same label has already appeared in $s(a)$.  Using this convention, the above wiring diagram can be rewritten as in Figure \ref{fig:olog16}.

\begin{figure}[h]
    \centering
    \includegraphics[scale=0.6]{p16b.jpg}
    \caption{}
    \label{fig:olog16}
\end{figure}
\end{comment}

Let us define numerical derivatives of $F_2$ and $F_3$ by setting
 \[
    dF_i(\bullet)= (\text{current value of }F_i)-(\text{value of }F_i\text{ five seconds ago}).
    \]
for $i=1,2,3$.  Then the process of ``a person $p$ buying coffee from a coffee shop $s$'' can also be represented by the wiring diagram 

\begin{comment}
\begin{figure}[h]
    \centering
    \includegraphics[scale=0.6]{p21.jpg}
    \caption{}
    \label{fig:olog21}
\end{figure}
\end{comment}

% https://q.uiver.app/#q=WzAsNCxbMiwwLCJcXGJlZ2lue21hdHJpeH0gXFxidWxsZXQgXFxcXCAoZEZfMixcXGJ1bGxldCwxKSAgXFxlbmR7bWF0cml4fSJdLFsyLDIsIlxcYmVnaW57bWF0cml4fSBcXGJ1bGxldCBcXFxcIChkRl8zLFxcYnVsbGV0LDEpICBcXGVuZHttYXRyaXh9Il0sWzAsMSwiXFxiZWdpbnttYXRyaXh9IFxcYnVsbGV0IFxcXFwgKGRGXzEsXFxidWxsZXQsMSkgIFxcZW5ke21hdHJpeH0iXSxbNCwxLCJcXGJlZ2lue21hdHJpeH0gXFxidWxsZXQgXFxcXCAoZEZfMSxcXGJ1bGxldCwtMSkgIFxcZW5ke21hdHJpeH0iXSxbMiwwXSxbMCwzXSxbMiwxXSxbMSwzXV0=
\begin{equation}\label{fig:olog21}
\begin{tikzcd}
	&& {\begin{matrix} \bullet \\ (dF_2,\bullet,1)  \end{matrix}} \\
	{\begin{matrix} \bullet \\ (dF_1,\bullet,1)  \end{matrix}} &&&& {\begin{matrix} \bullet \\ (dF_1,\bullet,-1)  \end{matrix}} \\
	&& {\begin{matrix} \bullet \\ (dF_3,\bullet,1)  \end{matrix}}
	\arrow[from=2-1, to=1-3]
	\arrow[from=1-3, to=2-5]
	\arrow[from=2-1, to=3-3]
	\arrow[from=3-3, to=2-5]
\end{tikzcd}\end{equation}

In this wiring diagram, each of the four labels corresponds to an action by $p$.

\paragraph[Wiring diagrams and ologs]\label{sec:WDandologs} Recall that every label in a wiring diagram is of the form $(F,x,y)$ where $F$ is a sensing function with some domain $D_F$ and codomain $C_F$.  Fix an element $y_0$ of $C_F$.  We can construct the olog   \eqref{fig:olog19}, where each vertical square is constructed using a fiber product.  The instances of the type $\lceil$an element $x$ of $D_F$, $F(x)=y_0\rceil$ correspond to labels of the form $(F,x,y_0)$, and so we can take this type in the olog as a representation of  the concept captured by the label $(F,x,y)$.  This way, every label in a wiring diagram can be represented by a type in an olog.  Since we can define the distance between any two types in an olog (Section \ref{sec:quantifysim-1}), we can define the distance between any two labels in a wiring diagram once we represent them as types in the same olog.

\begin{equation}\label{fig:olog19}
\begin{tikzcd}
	& {\boxed{\begin{matrix}\text{an element }x\text{ of }D_F, \\ F(x)=y_0\end{matrix}}} && {\boxed{\begin{matrix}\text{an element of }D_F \end{matrix}}} \\
	{\boxed{\begin{matrix}\text{an element }x\text{ of }D_F, \\ F(x)\neq y_0\end{matrix}}} \\
	& {\{y_0\}} && {\boxed{\begin{matrix}\text{an element of }C_F \end{matrix}}} \\
	{\boxed{\begin{matrix}\text{an element }y\text{ of }C_F, \\ y\neq y_0\end{matrix}}}
	\arrow["F", from=1-4, to=3-4]
	\arrow["{\text{is}}", from=3-2, to=3-4]
	\arrow["{\text{is}}", from=1-2, to=1-4]
	\arrow[from=1-2, to=3-2]
	\arrow["{\text{is}}"', from=2-1, to=1-4]
	\arrow["{\text{is}}"', from=4-1, to=3-4]
	\arrow[from=2-1, to=4-1]
\end{tikzcd}\end{equation}

\paragraph[Relations and sensing functions]\label{sec:relnsfs}  Even though a label in a wiring diagram must be of the form $(F,x,y)$ by definition, this definition is broad enough to describe relations among entities.  To see this, let us recall some basic terminology on relations on sets.  

Given two sets $S$ and $T$, a \emph{binary relation} $R$ on $S \times T$ is simply defined to be a subset of $S \times T$.  For any $a \in S$ and $b \in T$, we say \emph{$a$ is related to $b$} or write $a\thicksim b$ (when $R$ is understood) to mean $(a,b) \in R$.

Given a set $S$, a binary relation $R$ on $S$ is  defined to be a relation on $S \times S$.  For a relation $R$ on a set $S$, we say $R$ is
\begin{itemize}
    \item reflexive if   $x \thicksim x$ for all $x \in S$;
    \item anti-symmetric if, whenever $x \thicksim y$ and $y \thicksim x$ for $x, y \in S$, it follows that $x=y$;
    \item transitive if, whenever $x \thicksim y$ and $y \thicksim z$ for $x, y, z \in S$, we have $x \thicksim z$.
\end{itemize}
A binary relation that is both reflexive and transitive is called a \emph{preorder}; a preorder that is also anti-symmetric is called a \emph{partial order}.

Given a relation $R$ on a set $A \times B$, we can  define the function 
\[
  q_R : A \times B \to \{0,1\} : (a,b) \mapsto \begin{cases} 0 &\text{ if $a \nsim b$} \\
  1 &\text{if $a \thicksim b$} \end{cases}.
\]
That is, $F$ is a function that detects whether a pair $(a,b)$ satisfies the relation $R$.  If we write $i_j$ to denote the inclusion of $\{j\}$ into $\{0,1\}$ for $j=0,1$, then we can construct the  olog  \eqref{fig:olog27} where each vertical square is a fiber product.  Pairs $(a,b)$ that lie in $R$ are now instances of the type $P_1$, and so we can take $P_1$ as a type that represents the relation $R$.

\begin{comment}
\begin{figure}[h]
    \centering
    \includegraphics[scale=0.6]{p27b.jpg}
    \caption{}
    \label{fig:olog27}
\end{figure}
\end{comment}

% https://q.uiver.app/#q=WzAsNixbMywxLCJQOiBcXGJveGVke1xcYmVnaW57bWF0cml4fVxcdGV4dHthIHBhaXIgfShhLGIpXFx0ZXh0eyB3aGVyZX0gXFxcXCBhXFxpbiBBLCBiIFxcaW4gQiBcXGVuZHttYXRyaXh9fSJdLFszLDMsIlxcezAsMVxcfSJdLFsxLDIsIlxcezBcXH0iXSxbMCwzLCJcXHsxXFx9Il0sWzEsMCwiUF8wOiBcXGJveGVke1xcYmVnaW57bWF0cml4fVxcdGV4dHthIHBhaXIgfShhLGIpXFx0ZXh0eyB3aGVyZX0gXFxcXCBhXFxpbiBBLCBiIFxcaW4gQiBcXFxcIFxcdGV4dHthbmQgfWEgXFxuc2ltIGIgXFxlbmR7bWF0cml4fX0iXSxbMCwxLCJQXzE6IFxcYm94ZWR7XFxiZWdpbnttYXRyaXh9XFx0ZXh0e2EgcGFpciB9KGEsYilcXHRleHR7IHdoZXJlfSBcXFxcIGFcXGluIEEsIGIgXFxpbiBCIFxcXFwgXFx0ZXh0e2FuZCB9YSBcXHRoaWNrc2ltIGIgXFxlbmR7bWF0cml4fX0iXSxbMCwxLCJxX1IiXSxbMiwxLCJpXzAiXSxbMywxLCJpXzEiLDJdLFs0LDJdLFs0LDBdLFs1LDNdLFs1LDBdXQ==
\begin{equation}\label{fig:olog27}
\begin{tikzcd}
	& {P_0: \boxed{\begin{matrix}\text{a pair }(a,b)\text{ where} \\ a\in A, b \in B \\ \text{and }a \nsim b \end{matrix}}} \\
	{P_1: \boxed{\begin{matrix}\text{a pair }(a,b)\text{ where} \\ a\in A, b \in B \\ \text{and }a \thicksim b \end{matrix}}} &&& {P: \boxed{\begin{matrix}\text{a pair }(a,b)\text{ where} \\ a\in A, b \in B \end{matrix}}} \\
	& {\{0\}} \\
	{\{1\}} &&& {\{0,1\}}
	\arrow["{q_R}", from=2-4, to=4-4]
	\arrow["{i_0}", from=3-2, to=4-4]
	\arrow["{i_1}"', from=4-1, to=4-4]
	\arrow[from=1-2, to=3-2]
	\arrow[from=1-2, to=2-4]
	\arrow[from=2-1, to=4-1]
	\arrow[from=2-1, to=2-4]
\end{tikzcd}\end{equation}

Note that we can regard $q_R$ as a sensing function with domain $A \times B$ and codomain $\{0,1\}$; this way, the olog   \eqref{fig:olog27} is merely a special case of the olog   \eqref{fig:olog19}.  The concept ``$a$ is related to $b$ with respect to the relation $R$'' can now be represented by the label $(q_R, (a,b), 1)$ in a wiring diagram.  Equivalently, we can rewrite the label as $(F,\bullet, 1)$ where $F$ is the sensing function 
\[
F_{(a,b),R} : \{\bullet \} \to \{0,1\} : \bullet \mapsto \begin{cases} 0 &\text{ if $a \nsim b$} \\ 1 &\text{ if $a \thicksim b$} \end{cases}.
\]

\begin{rem}
In the previous section, we saw that every label in a wiring diagram can be represented by  a type in an olog.  In the current section, we saw that labels can represent whether two entities satisfy a relation.
\end{rem}

\section{Using wiring diagrams to quantify analogy}\label{sec:catofWDs}

In Section \ref{sec:quantifysim-1}, we saw that there is a way to define the distance between any two concepts that occur as types in the same olog.  In Section \ref{sec:comparingrels}, we saw that the relations between different entities (such as ownership or access) can be represented as types in an olog.  Then, in Section \ref{sec:WDforprocesses}, we saw that wiring diagrams can represent processes that occur over a period of time, and that the labels at the vertices of a wiring diagram can be defined using concepts that occur in an olog.  This allowed us to conclude in Section \ref{sec:WDandologs} that we can define the distance between any two labels in a wiring diagram, as long as they both correspond to types in the same olog.

In this section, we propose a definition of  distance between any two wiring diagrams.  Our definition builds on the idea of elementary edit operations between graphs - which leads to the notion of graph edit distance -  taking advantage of the fact that every wiring diagram has an underlying graph.  Our approach has two advantages compared to simply considering the graph edit distance between the underlying graphs, however.  First, we consider categories generated by these graphs, which allow us to make better use of the inherent structures of wiring diagrams; second, since labels of wiring diagrams correspond to types in an olog, we also have a measure of distance among the labels themselves that takes into account the structure of the olog being used.  That is, our definition refines graph edit distance by utilizing  the categorical aspects of wiring diagrams.

\paragraph[A category of skeleton WD graphs]\label{para:catskWDgs}  All the wiring diagrams that have appeared in this article are `skeleton' in the following sense: 

\begin{defn}
    We say a WD graph $G=(V,A,s,t)$ is \emph{skeleton} if it satisfies the following condition:
    \begin{itemize}
        \item[WD3.] For any two distinct vertices $v, v' \in V$, if there is already a path from $v$ to $v'$ given by arrows $a_1, \cdots, a_n$ in this order, then there cannot be any arrow $a^\ast$ from $v$ to $v'$ such that $a^\ast \neq a_i$ for all $1 \leq i \leq n$.   %there is at most one path from $v$ to $v'$. 
    \end{itemize}
We say a wiring diagram is skeleton if its underlying graph is skeleton.
\end{defn}

Note that  for any two distinct vertices $v, v'$ in a skeleton WD graph, there is at most one arrow from $v$ to $v'$.

We now describe a construction that takes any skeleton WD graph  and produces a partial order on its set of vertices.  Suppose $G=(V,A,s,t)$ is a skeleton WD graph.    First, we define a relation $R_0$ on $V$ by setting
\[
R_0 = \{ (x,y) \in V \times V : x=s(a), y=t(a) \text{ for some }a \in A\}.
\]
Next, we define the transitive closure $R_2$ of $R_0$  \cite[D26, Chap.\ 3]{handGT}.  That is, we first define 
\[
R_1 = R_0 \bigcup \{ (x,y) \in V \times V : x=y\},
\]
and then declare an element $(x,y)$ of $V \times V$ to be in $R_2$ if and only if there is a sequence of elements $(x_0,x_1), (x_1,x_2), \cdots, (x_{k-1},x_k)$ in $R_1$ with $k\geq 1$ such that $x_0=x$ and $x_k=y$.  In other words, $R_2$ is the result of forcing reflexivity and transitivity on $R_0$.  By construction, $R_2$ is a preorder on $V$ and $R_0 \subseteq R_2$.  We write $R(G)$ to denote $R_2$.  Note that $R(G)$ depends only on $G$, and not a choice of the bijection $\wt{f}$ from Lemma  \ref{lem:WDgraphsRaxal}.

\begin{lem}\label{lem:RofG-1}
Let $G$ be a skeleton WD graph, and $\wt{f} : V \to \{1, \cdots, n\}$ (where $n=|V|$) any bijection as in Lemma  \ref{lem:WDgraphsRaxal}.  Then 
\begin{itemize}
    \item[(i)] For any element $(x,y)$ of $R(G)$ where $x \neq y$, we have $\wt{f}(x) < \wt{f}(y)$.
    \item[(ii)] If we identify $V$ with the set $\{1, \cdots, |V|\}$ via the bijection  $\wt{f}$, then $R(G)$ is a subset of the natural preorder on $\mathbb{Z}$.
\end{itemize}
\end{lem}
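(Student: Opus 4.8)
The plan is to prove (i) directly from the construction of $R(G)$ as a transitive closure, and then obtain (ii) as an immediate consequence. The only inputs needed are the defining property of $\wt{f}$ from Lemma \ref{lem:WDgraphsRaxal}, namely that $\wt{f}(s(a)) < \wt{f}(t(a))$ for every arrow $a$, together with the fact that $\wt{f}$ is a bijection and hence injective.

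For (i), I would start from an arbitrary $(x,y) \in R(G)$ with $x \neq y$. By the definition of $R_2 = R(G)$, there is a chain $(x_0,x_1), (x_1,x_2), \ldots, (x_{k-1},x_k)$ of elements of $R_1$ with $x_0 = x$ and $x_k = y$. The key observation is to track $\wt{f}$ across a single link: each pair $(x_{i-1}, x_i) \in R_1$ is either a reflexive pair (so $x_{i-1} = x_i$ and $\wt{f}(x_{i-1}) = \wt{f}(x_i)$) or lies in $R_0$, i.e.\ comes from an actual arrow $a$ with $s(a) = x_{i-1}$ and $t(a) = x_i$, in which case Lemma \ref{lem:WDgraphsRaxal} gives $\wt{f}(x_{i-1}) < \wt{f}(x_i)$. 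In either case $\wt{f}(x_{i-1}) \leq \wt{f}(x_i)$, so telescoping the inequalities yields $\wt{f}(x) = \wt{f}(x_0) \leq \wt{f}(x_k) = \wt{f}(y)$. Finally, since $\wt{f}$ is injective and $x \neq y$, we have $\wt{f}(x) \neq \wt{f}(y)$, which upgrades the weak inequality to the strict inequality $\wt{f}(x) < \wt{f}(y)$.

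Part (ii) then follows with essentially no extra work. Under the identification of $V$ with $\{1, \ldots, |V|\}$ through $\wt{f}$, a pair $(x,y) \in R(G)$ becomes the pair $(\wt{f}(x), \wt{f}(y))$ of integers. If $x = y$ this pair lies on the diagonal, and if $x \neq y$ part (i) gives $\wt{f}(x) < \wt{f}(y)$; in both cases $\wt{f}(x) \leq \wt{f}(y)$. Hence every element of $R(G)$ maps into the natural preorder $\{(m,m') : m \leq m'\}$ on $\mathbb{Z}$, which is exactly the assertion of (ii).

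Since the argument is a short induction on chain length disguised as a telescoping inequality, there is no serious obstacle; the only point requiring care is the source of the strictness in (i). It is tempting to argue that the chain must contain at least one genuine $R_0$-link (which is true, because an all-reflexive chain would force $x = y$), but it is cleaner and more robust to obtain strictness from the injectivity of $\wt{f}$ after establishing the weak inequality. I would also remark that the conclusion holds for every bijection $\wt{f}$ satisfying WD2, consistent with the earlier observation that $R(G)$ itself does not depend on the choice of $\wt{f}$.
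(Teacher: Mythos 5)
Your proof is correct and follows essentially the same route as the paper's: decompose an element of $R(G)$ into a chain of $R_1$-links, observe that $\wt{f}$ is constant across reflexive links and strictly increasing across $R_0$-links, and conclude. The only (harmless) difference is cosmetic: you derive strictness from the injectivity of the bijection $\wt{f}$ after telescoping to a weak inequality, whereas the paper's implicit argument is that a chain joining distinct endpoints must contain at least one $R_0$-link; both are valid and both observations appear in your own closing remarks.
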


\begin{proof}
(i) Take any $(x,y) \in R(G)$ such that $x \neq y$.  From the construction above, there exists a sequence $(x_0,x_1), (x_1,x_2), \cdots, (x_{k-1},x_k)$ in $R_1$ with $k\geq 1$ such that $x_0=x, x_k=y$.  For each $0 \leq i \leq k-1$, either $x_i=x_{i+1}$ in which case $\wt{f}(x_i)=\wt{f}(x_{i+1})$, or $x_i \neq x_{i+1}$ in which case $(x_i, x_{i+1}) \in R_0$ and $\wt{f}(x_i)<\wt{f}(x_{i+1})$.  The claim then follows.

(ii) This follows immediately from (i).
\end{proof}

From the construction of $R(G)$, it is clear that $R(G)$ is always a partial order on $V$.  Lemma \ref{lem:RofG-1}(ii) says that there is always an embedding of partial orders from $R(G)$ into the natural partial order $(\mathbb{Z}, \leq)$.

For any finite set $V$, we can now define a category $\mathcal{R}(V)$.  We will take the objects of $\mathcal{R}(V)$ to be skeleton WD graphs $G$ whose underlying set of vertices is exactly $V$.  Given any two skeleton WD graphs $G_1, G_2$, we will define a morphism $G_1 \to G_2$ whenever $R(G_2) \subseteq R(G_1)$.  (Note that $R(G_1), R(G_2)$ are both subsets of $V \times V$.)  More precisely, if we consider the category $\mathcal{P}(V \times V)$ of subsets of $V \times V$ where morphisms are set inclusions, then we declare a morphism $\alpha_i : G_1 \to G_2$ in $\mathcal{R}(V)$ whenever there is a morphism $i : R(G_2) \to R(G_1)$ in $\mathcal{P}(V \times V)$.  In particular, for any object $G$ in $\mathcal{R}(V)$, we declare the identity morphism on $G$ to be that corresponding to the identity function on $R(G)$.  Given any two composable morphisms, say $\alpha_i : G_1 \to G_2$ and $\alpha_j : G_2 \to G_3$, we define the composition $\alpha_j \alpha_i$ to be $\alpha_{ij}$, i.e.\ the morphism corresponding to the composite set inclusion $ij : R(G_3) \subseteq R(G_1)$.  It is easy to see that $\mathcal{R}(V)$ satisfies the axioms of a category.  We will refer to $\mathcal{R}(V)$ as the category of skeleton WD graphs over $V$.

Sometimes, we will use $\Rightarrow$ to indicate a morphism in $\mathcal{R}(V)$ to better distinguish between the arrows within wiring diagrams themselves.  We say a morphism $\alpha: G_1 \to G_2$ in $\mathcal{R}(V)$ is \emph{irreducible} if it cannot be written as the composition of two non-identity morphisms, i.e.\ if there is no skeleton WD graph $G_3$ such that $R(G_2) \subsetneq R(G_3) \subsetneq R(G_1)$.

\begin{eg}
Let $V$ be the set $\{A, B, C\}$.  Then $\alpha, \alpha'$ below are morphisms in the category $\mathcal{R}(V)$

\begin{comment}
\begin{figure}[h]
    \centering
    \includegraphics[scale=0.6]{p24.jpg}
    \caption{}
    \label{fig:olog24}
\end{figure}
\end{comment}

% https://q.uiver.app/#q=WzAsOSxbMiwwLCJBIl0sWzMsMCwiQiJdLFs0LDAsIkMiXSxbMCwyLCJBIl0sWzEsMywiQyJdLFswLDQsIkIiXSxbNSwzLCJBIl0sWzYsMiwiQiJdLFs2LDQsIkMiXSxbMCwxXSxbMSwyXSxbMyw0XSxbNSw0XSxbNiw3XSxbNiw4XSxbOSwxMSwiXFxhbHBoYSIsMix7InNob3J0ZW4iOnsic291cmNlIjoyMCwidGFyZ2V0IjoyMH19XSxbMTAsMTMsIlxcYWxwaGEnIiwwLHsic2hvcnRlbiI6eyJzb3VyY2UiOjIwLCJ0YXJnZXQiOjIwfX1dXQ==
\begin{equation}\label{fig:olog24}
\begin{tikzcd}
	&& A & B & C \\
	\\
	A &&&&&& B \\
	& C &&&& A \\
	B &&&&&& C
	\arrow[""{name=0, anchor=center, inner sep=0}, from=1-3, to=1-4]
	\arrow[""{name=1, anchor=center, inner sep=0}, from=1-4, to=1-5]
	\arrow[""{name=2, anchor=center, inner sep=0}, from=3-1, to=4-2]
	\arrow[from=5-1, to=4-2]
	\arrow[""{name=3, anchor=center, inner sep=0}, from=4-6, to=3-7]
	\arrow[from=4-6, to=5-7]
	\arrow["\alpha"', shorten <=15pt, shorten >=15pt, Rightarrow, from=0, to=2]
	\arrow["{\alpha'}", shorten <=15pt, shorten >=15pt, Rightarrow, from=1, to=3]
\end{tikzcd}\end{equation}

The morphism $\alpha$ corresponds to the  inclusion 
\[
\{(A,C), (B,C)\} \subseteq \{ (A,B), (B,C), (A,C)\}
\]
of subsets of $V \times V$ while the morphism $\alpha'$ corresponds to the inclusion 
\[
\{(A,B), (A,C)\} \subseteq \{ (A,B), (B,C), (A,C)\}.
\]
\end{eg}

Informally, having a morphism $G\to G'$ in a category $\mathcal{R}(V)$ means that the partial order generated by $G'$ is `more general' (i.e.\ is a smaller subset of $V \times V$, and hence has `less restrictions') than that generated by $G$.

\paragraph[Morphisms in the category of skeleton WD graphs]  Morphisms in the category of skeleton WD graphs give us a way to compare intrinsic structures of wiring diagrams.  

Let us return to the example  in Section \ref{eg:buyingcoffee}, where we defined sensing functions $dF_1, dF_2, dF_3$ and used them to write down a wiring diagram as in  \eqref{fig:olog21} to represent the process of ``person $p$ buying coffee from coffee shop $s$''.  Different people might have come up with  different wiring diagrams to represent the same process.  Both wiring diagrams in  \eqref{fig:olog22}  can  represent the process of $p$ buying coffee from $s$, the difference being whether we require the person to pay for coffee before or after they receive it.

\begin{comment}
\begin{figure}[h]
    \centering
    \includegraphics[scale=0.6]{p22.jpg}
    \caption{}
    \label{fig:olog22}
\end{figure}
\end{comment}

\begin{equation}\label{fig:olog22}
\begin{tikzcd}
	{\begin{matrix}\bullet\\ (dF_1,\bullet,1)\end{matrix}} & {\begin{matrix}\bullet\\ (dF_2,\bullet,1)\end{matrix}} & {\begin{matrix}\bullet\\ (dF_3,\bullet,1)\end{matrix}} & {\begin{matrix}\bullet\\ (dF_1,\bullet,-1)\end{matrix}} \\
	\\
	{\begin{matrix}\bullet\\ (dF_1,\bullet,1)\end{matrix}} & {\begin{matrix}\bullet\\ (dF_3,\bullet,1)\end{matrix}} & {\begin{matrix}\bullet\\ (dF_2,\bullet,1)\end{matrix}} & {\begin{matrix}\bullet\\ (dF_1,\bullet,-1)\end{matrix}}
	\arrow[shift left=2, from=1-1, to=1-2]
	\arrow[shift left=2, from=1-2, to=1-3]
	\arrow[shift left=2, from=1-3, to=1-4]
	\arrow[shift left=2, from=3-1, to=3-2]
	\arrow[shift left=2, from=3-2, to=3-3]
	\arrow[shift left=2, from=3-3, to=3-4]
\end{tikzcd}\end{equation}

In practice, we would want to consider the two wiring diagrams in  \eqref{fig:olog22} as very `similar' to the wiring diagram in  \eqref{fig:olog21}; in fact, we would normally think of the two diagrams in  \eqref{fig:olog22} as special cases of that in  \eqref{fig:olog21}.  To make these comparisons mathematically precise, we can use the category of skeleton WD graphs.

For simplicity, let us write $A, B, C, D$ to denote the  labels 
\[
(dF_1,\bullet,1),\, \, \, (dF_2,\bullet,1), \, \, \,(dF_3,\bullet,1), \, \, \,(dF_1,\bullet,-1),
\]
respectively.  Let $V$ be the set $\{ A, B, C, D\}$, and consider the category $\mathcal{R}(V)$ of skeleton WD graphs over $V$.  Then we have   two morphisms $\alpha, \alpha'$ in $\mathcal{R}(V)$   as in  \eqref{fig:olog23}.

\begin{comment}
\begin{figure}[h]
    \centering
    \includegraphics[scale=0.6]{p23b.jpg}
    \caption{}
    \label{fig:olog23}
\end{figure}
\end{comment}

% https://q.uiver.app/#q=WzAsMTIsWzAsMCwiQSJdLFsxLDAsIkIiXSxbMiwwLCJDIl0sWzMsMCwiRCJdLFs3LDAsIkEiXSxbOCwwLCJDIl0sWzksMCwiQiJdLFsxMCwwLCJEIl0sWzUsMiwiQiJdLFs1LDQsIkMiXSxbNCwzLCJBIl0sWzYsMywiRCJdLFswLDFdLFsxLDJdLFsyLDNdLFs0LDVdLFs1LDZdLFs2LDddLFsxMCw4XSxbOCwxMV0sWzEwLDldLFs5LDExXSxbMTMsMTgsIlxcYWxwaGEiLDIseyJzaG9ydGVuIjp7InNvdXJjZSI6MjAsInRhcmdldCI6MjB9fV0sWzE2LDE5LCJcXGFscGhhJyIsMCx7InNob3J0ZW4iOnsic291cmNlIjoyMCwidGFyZ2V0IjoyMH19XV0=
\begin{equation}\label{fig:olog23}
\begin{tikzcd}
	A & B & C & D &&&& A & C & B & D \\
	\\
	&&&&& B \\
	&&&& A && D \\
	&&&&& C
	\arrow[from=1-1, to=1-2]
	\arrow[""{name=0, anchor=center, inner sep=0}, from=1-2, to=1-3]
	\arrow[from=1-3, to=1-4]
	\arrow[from=1-8, to=1-9]
	\arrow[""{name=1, anchor=center, inner sep=0}, from=1-9, to=1-10]
	\arrow[from=1-10, to=1-11]
	\arrow[""{name=2, anchor=center, inner sep=0}, from=4-5, to=3-6]
	\arrow[""{name=3, anchor=center, inner sep=0}, from=3-6, to=4-7]
	\arrow[from=4-5, to=5-6]
	\arrow[from=5-6, to=4-7]
	\arrow["\alpha"', shorten <=20pt, shorten >=20pt, Rightarrow, from=0, to=2]
	\arrow["{\alpha'}", shorten <=20pt, shorten >=20pt, Rightarrow, from=1, to=3]
\end{tikzcd}\end{equation}

The morphisms $\alpha, \alpha'$ can be considered as mathematical formulations of the similarities between these wiring diagrams. 

\begin{rem}\label{rem:oldvsnewWDmor}
Even though there is a notion of a category of graphs, the morphisms $\alpha, \alpha'$ cannot have been defined as morphisms of graphs.  In fact, using a standard definition of a morphism between graphs \cite[Section II.7]{maclane:71}, in \eqref{fig:olog23} any morphism in the category of graphs from the upper left graph to the lower graph  should take the arrow from $B$ to $C$ to some arrow from $B$ to $C$, whereas we do not have any arrow between $B$ and $C$ in the lower graph.
\end{rem}

\paragraph[Edit distance for graphs]\label{sec:distanceedit-G} For undirected graphs, a standard method for measuring the similarity between graphs is to use the graph edit distance.  To define  graph edit distance, one first needs to decide on a set of elementary edit operations on graphs such as inserting or deleting a vertex, inserting or deleting an edge, or changing the label of a vertex or an edge.  The graph edit distance between two graphs $G, G'$ is then the minimum number of elementary operations needed in order to transform $G$ to $G'$ (e.g.\ see \cite[Section 3.1]{gapGEDKM} or \cite{Riesen:1413749,Serra1,GXTL}).  The graph edit distance is a metric on the set of all finite graphs.  %\textcolor{blue}{[ add more refs on graph edit distance]}

\paragraph[Distance for wiring diagrams]\label{sec:distanceedit-WD} Since wiring diagrams can be considered as directed graphs where the vertices are labelled with state vectors, we can also define a version of the graph edit distance tailored to wiring diagrams.  In fact, we will introduce new operations on graphs that are only possible by considering the intrinsic structures of wiring diagrams.

Let us write $W_s^\bullet$ to denote the set of all  skeleton wiring diagrams where the state vector $L_v$ at every vertex $v$ is nonempty, and where the underlying graph has at least one vertex.  To begin with, we define \emph{elementary edit operations} on such wiring diagrams to be the following. 
\begin{itemize}
    \item[(i)] Adding a new vertex with a nonempty state vector.
    \item[(ii)] Deleting a vertex along with its state vector.
    \item[(iii)] Adding a new label at a vertex.
    \item[(iv)] Deleting an existing  label at a vertex.
    \item[(v)]  Changing an existing label at a vertex to a different label.
    \item[(vi)] Adding an arrow.
    \item[(vii)] Deleting an arrow.
    \item[(viii)] Replacing the underlying graph $G=(V,A,s,t)$ of a skeleton wiring diagram with  another skeleton WD graph $G'=(V',A',s',t')$, such that $V'=V$ and there is an irreducible morphism $G \to G'$ in $\mathcal{R}(V)$.  
    \item[(ix)] Replacing the underlying graph $G=(V,A,s,t)$ of a skeleton wiring diagram with a another skeleton WD graph $G^\ast=(V^\ast,A^\ast,s^\ast,t^\ast)$, such that $V^\ast=V$ and there is an irreducible morphism $G^\ast \to G$ in $\mathcal{R}(V)$.
\end{itemize}
We require an elementary operation to take a wiring diagram in $W_s^\bullet$ to another wiring diagram in $W_s^\bullet$.  For example, we cannot apply operation (iv) to a vertex if it results in the vertex having an empty state vector, while  operation (vi) is only valid if condition WD2 continues to hold.  We will write $\mathrm{EEO}(W_s^\bullet)$ to represent the set of all possible elementary edit operations on $W_s^\bullet$.

%[say: assume all sets and categories are small]

%[if I want to add morphisms in category of skeleton WD graphs, we need to make sure we have a skeleton WD graph to begin with ]

Note that operations (viii) and (ix) only change the arrows in the underlying graph and do not change the state vectors at the vertices.  As we saw in Remark \ref{rem:oldvsnewWDmor}, operations (viii) and (ix) cannot always be replaced with elementary edit operations of other types.     Also, operations of types (i), (iii), (vi),  (viii) are the inverses of operations of types (ii), (iv), (vii), (ix), respectively, while the inverse of an operation of type (v) is again of type (v).

If there is a sequence $E_1, \cdots, E_m$ of elementary edit operations (where $m$ is a positive integer) that transforms a wiring diagram $W$ in $W_s^\bullet$ to another wiring diagram $W'$ in $W_s^\bullet$, then we say $(E_1,\cdots, E_m)$ is an \emph{edit path} from $W$ to $W'$.  Given two wiring diagrams $W, W'$ in $W_s^\bullet$, we will write $P(W,W')$ to denote the set of all edit paths $(E_1, \cdots, E_m)$ that transform $W$ to $W'$.  Note that the length $m$ of the edit path may be different for different paths.

\begin{lem}\label{lem:def:distWD}
Let $c$ be any function from $\mathrm{EEO}(W_s^\bullet)$ to $\mathbb{R}_{>0}$.  For any $W, W' \in W_s^\bullet$, set 
\[
 d(W,W') = \min{\Bigl\{ \sum_{i=1}^m c(E_i) : (E_1, \cdots, E_m) \in P(W,W')    \Bigr\}}.
\]
Then $d$ is a function $W_s^\bullet \times W_s^\bullet \to \mathbb{R}_{>0}$ that defines a metric on $W_s^\bullet$.
\end{lem}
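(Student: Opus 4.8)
The plan is to verify the three metric axioms for $d$ — non-negativity together with the identity of indiscernibles, symmetry, and the triangle inequality — by viewing $d$ as a shortest-path distance on the (infinite) graph $\Gamma$ whose vertex set is $W_s^\bullet$ and in which $W$ is joined to $W'$ by an edge of weight $c(E)$ whenever some elementary edit operation $E$ carries $W$ to $W'$. In this picture $d(W,W')$ is exactly the minimum weight of a walk from $W$ to $W'$, so the setup is formally the same as the shortest-distance metric of Definition \ref{def:sdm}, and the axioms will follow from three ingredients: the reversibility of every elementary operation, the concatenability of edit paths, and the positivity of $c$.

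First I would show $d$ is well defined, i.e.\ that $P(W,W')$ is nonempty so that the extremum is taken over a nonempty set; equivalently, that $\Gamma$ is connected. For this I would reduce any $W\in W_s^\bullet$ to a canonical diagram with a single vertex carrying a single label: delete arrows via (vii), delete surplus labels via (iv) (leaving each state vector nonempty), and delete surplus vertices via (ii). Conversely, such a canonical diagram can be grown into any target $W'$ by adding vertices (i), adding labels (iii), changing labels (v), and adding the arrows of $W'$ one at a time via (vi). One checks that each intermediate diagram remains in $W_s^\bullet$: condition WD2 persists because any subset of the arrows of $W'$ respects a topological order supplied by Lemma \ref{lem:WDgraphsRaxal}, and the skeleton condition WD3 persists because any subgraph of a skeleton WD graph is again skeleton. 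Concatenating a reduction of $W$ with a construction of $W'$ then yields an edit path, so $P(W,W')\neq\emptyset$.

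For symmetry I would invoke the observation recorded just before the lemma, that the nine elementary operations fall into inverse pairs (i)$\leftrightarrow$(ii), (iii)$\leftrightarrow$(iv), (vi)$\leftrightarrow$(vii), (viii)$\leftrightarrow$(ix), and (v)$\leftrightarrow$(v). Reversing an edit path $(E_1,\dots,E_m)\in P(W,W')$ and replacing each $E_i$ by its inverse produces $(E_m^{-1},\dots,E_1^{-1})\in P(W',W)$, and this is a bijection $P(W,W')\to P(W',W)$; it is cost-preserving precisely when $c(E)=c(E^{-1})$ for every $E$, in which case the two minima coincide and $d(W,W')=d(W',W)$. The triangle inequality is the easy direction: concatenating edit paths realizing $d(W,W')$ and $d(W',W'')$ gives an edit path from $W$ to $W''$, whence $d(W,W'')\le d(W,W')+d(W',W'')$.

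The genuinely delicate points — which I expect to be the main obstacles — concern the identity of indiscernibles and the meaning of the extremum. The equality $d(W,W)=0$ forces us to admit the empty edit path (an empty sum of costs), so the codomain should be read as $\mathbb{R}_{\ge 0}$ rather than $\mathbb{R}_{>0}$. For the converse, when $W\neq W'$ every edit path has length $m\ge 1$, so $d(W,W')\ge\inf c$; this is strictly positive once $c$ is bounded below by some $\varepsilon>0$, but for a merely positive $c$ on the infinite set $\mathrm{EEO}(W_s^\bullet)$ it can fail, since one may chain operations of vanishing cost. I would therefore carry out the three axioms with $d$ defined as an infimum — non-negativity, the symmetry bijection, and the triangle inequality via concatenation all survive verbatim for infima — under the two standing hypotheses $\inf c>0$ and $c(E)=c(E^{-1})$. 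To recover the literal minimum of the statement one additionally needs the infimum to be attained, which is automatic when only finitely many labels occur (finitely many relevant operations, so $\Gamma$ may be replaced by a finite subgraph) but can otherwise fail; absent such a finiteness assumption I would read $\min$ as $\inf$. Isolating these hypotheses on $c$ is the crux of the argument, and the remaining verifications are routine bookkeeping.
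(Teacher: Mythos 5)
Your argument follows the same route as the paper's own proof, which is only two sentences long: the paper establishes $P(W,W')\neq\emptyset$ exactly as you do, by reducing any diagram in $W_s^\bullet$ to one with a single vertex and a single label via operations of types (vii), (ii) and (iv), and then it formally decrees $d(W,W)=0$ and appeals to ``a standard argument'' for the metric axioms. What you add is a correct identification of what that standard argument actually needs, and your caveats are genuine. First, symmetry of $d$ does require $c(E^{-1})=c(E)$ for every elementary operation $E$: the paper records that the operation types pair off into mutual inverses, but the lemma permits an arbitrary positive $c$, under which $d$ is in general only a quasi-metric. Second, since $\mathrm{EEO}(W_s^\bullet)$ is infinite (there are infinitely many possible labels, vertices and diagrams), the minimum in the displayed formula need not be attained, so it should indeed be read as an infimum, and then positivity of $d(W,W')$ for $W\neq W'$ needs $\inf c>0$; the paper's assertion that $d(W,W')$ ``is a positive real number'' silently presumes attainment. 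Third, a metric must vanish on the diagonal, so the codomain must be $\mathbb{R}_{\geq 0}$; the paper fixes this by redefining $d(W,W)=0$ (its edit paths have length $m\geq 1$), you by admitting the empty path --- equivalent repairs. Your connectivity argument is also slightly more complete than the paper's: the observations that a subgraph on the same vertex set of a skeleton WD graph is again skeleton, and that WD2 persists under removing arrows, are exactly what is needed to keep the intermediate diagrams of the growing phase inside $W_s^\bullet$, a point the paper leaves implicit. In short: same skeleton of proof, but your version isolates hypotheses on $c$ (cost symmetry and boundedness below, both satisfied by the natural choices the paper has in mind, e.g.\ costs induced from a finite olog as in Example \ref{eg:metricWsbwithologdist}) without which the lemma as literally stated does not hold.
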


\begin{proof}
Given any wiring diagram $W$ in $W_s^\bullet$, there is always a sequence of elementary edit operations of types (vii), (ii) and (iv) that transform $W$ into a wiring diagram with a single vertex and a single label.  This means that for any two elements $W, W'$ in $W_s^\bullet$, there is always an edit path of finite length from $W$ to $W'$.  Hence $d(W,W')$ is a positive real number, i.e.\ $d$ defines a function from  $W_s^\bullet \times W_s^\bullet$ to $\mathbb{R}_{>0}$.

If we formally define $d(W,W)=0$ for any $W \in W_s^\bullet$, then a standard argument shows that $d$ satisfies the requirements of a metric.
\end{proof}

Note that the distance $d(W,W')$ between two wiring diagrams $W, W'$ in $W_s^\bullet$ depends on two things:
\begin{itemize}
    \item The types of elementary edit operations allowed.
    \item The `cost function' $c$ in  Lemma \ref{lem:def:distWD}.
\end{itemize}
In particular, the cost function $c$ can be designed so as to reflect the olog that represents the internal knowledge of an autonomous system, as the next example shows.

\begin{eg}\label{eg:metricWsbwithologdist}
Fix an olog $O$.  (In practice, $O$ would contain the `internal knowledge' of an autonomous system.)  Assume that all the labels in wiring diagrams that will arise are uniquely represented by types in $O$.  In other words, if we let $\widetilde{L}$ represent the set of all labels that will appear in wiring diagrams considered, and let $T$ denote the set of all the types in $O$, then there is an injection $i : \widetilde{L} \to T$.  Suppose we want to compute $d(W,W')$ for some $W, W' \in W_s^\bullet$.  Let $A$ denote the set of edges in the underlying undirected graph of $O$ (i.e.\ we consider the underlying directed graph of $G$, and then ignore the directions of the arrows).  For any function $c_O : A \to \mathbb{R}_{>0}$, we can define a metric $d_O$ on $T$ as in Definition \ref{def:sdm}.  Now let $c$ be any function from $\mathrm{EEO}(W_s^\bullet)$ to $\mathbb{R}_{>0}$ such that, for any elementary edit operation $E$ of type (v) that changes a label $L$ to another label $L'$, we define
\[
c(E) = d_O(i(L), i(L')).
\]
That is, the cost of applying an operation $E$ of type (v) is computed as the distance from the type representing $L$ to the type representing $L'$ with respect to the metric $d_O$ on $T$.  The resulting metric $d$ on $W_s^\bullet$ then depends on the structure of the olog $O$ and the metric $d_O$ on the set $T$.
\end{eg}

As we will see in the next section, our definition of $d(W,W')$ utilizes properties of wiring diagrams and ologs that are not considered in usual definitions graph edit distance between two graphs.

\section{Example - comparing an analogy}\label{sec:compareanalogy}  We can now use elementary edit operations on skeleton wiring diagrams to quantify analogy between different concepts.

Suppose we want to compare the concept `an electric car charging station' and `a bus'.  In everyday language, we could say that these two concepts are analogous in the sense that both are physical entities capable of altering a characteristic of another physical entity.  That is, in order to determine the analogy between an electric car charging station and a bus, we must first spell out what we mean by these two concepts, such as:
\begin{itemize}
    \item[($S_1$)] An electric car charging station $s$ is a physical entity that increases the battery level of an electric car $c$, when the car is connected to the charging station.
    \item[($S_2$)] A bus $b$ is a physical object that can alter the location of a person $p$ when $p$ is inside  $b$.
\end{itemize}
Thus an electric car is an object that alters the characteristic `battery level' of an electric car, while a bus is an object that alters the characteristic `location' of a person.  To capture  this analogy mathematically, we need to represent these concepts as wiring diagrams.  We can think of wiring diagrams as giving a ``coordinate system'' for representing concepts such as a car charger or a bus, on which we can mathematically compare these concepts and quantify their similarity.

\paragraph[Sensing functions and wiring diagrams]\label{para:steps-sf}  For any electric car charging station $s$ and any electric car $c$, we will write $s \vDash c$ (resp.\ $s \nvDash c$) to mean `$c$ is connected to $s$' (resp.\ `$c$ is not connected to $s$').  We then define the sensing function $C_{s,c} : \{\bullet\} \to \{0,1\}$  by declaring 
\[
C_{s,c} (\bullet) = \begin{cases} 0 &\text{ if $s \nvDash c$} \\
1 &\text{ if $s \vDash c$} \end{cases}
\]
and subsequently a  `numerical derivative' 
\[
dC_{s,c}(\bullet) = (\text{current value of }C_{s,c})-(\text{value of }C_{s,c}\text{ five seconds ago}).
\]
Also, we define the sensing function $B_c : \{\bullet\} \to [0,100]$ that measures the battery level, as a percentage, of the electric car $c$.  If we model $B_c$ as a differentiable function over time $t$, we can take its derivative  $B_c'=\frac{dB_c}{dt}$ and thus define the sensing function $B_c^+ : \{\bullet\} \to \{0,1\}$ where
\[
B_c^+ (\bullet) = \begin{cases} 0 &\text{ if $B_c'\leq 0$} \\
1 &\text{ if $B_c'>0$} \end{cases}.
\]
Using the formulation in $S_1$, we can now represent the concept of an electric car charging station using the  wiring diagram  $W_1$ in  \eqref{fig:olog28}.

\begin{comment}
\begin{figure}[h]
    \centering
    \includegraphics[scale=0.5]{p28b.jpg}
    \caption{}
    \label{fig:olog28}
\end{figure}
\end{comment}

% https://q.uiver.app/#q=WzAsMyxbMCwwLCJXXzE6ICJdLFsyLDAsIlxcYmVnaW57bWF0cml4fVxcYnVsbGV0XFxcXCAoZENfe3MsY30sXFxidWxsZXQsMSlcXGVuZHttYXRyaXh9Il0sWzQsMCwiXFxiZWdpbnttYXRyaXh9XFxidWxsZXRcXFxcIChCX2NeKyxcXGJ1bGxldCwxKVxcZW5ke21hdHJpeH0iXSxbMSwyLCIiLDAseyJvZmZzZXQiOi0yfV1d
\begin{equation}\label{fig:olog28}
\begin{tikzcd}
	{W_1: } && {\begin{matrix}\bullet\\ (dC_{s,c},\bullet,1)\end{matrix}} && {\begin{matrix}\bullet\\ (B_c^+,\bullet,1)\end{matrix}}
	\arrow[shift left=2, from=1-3, to=1-5]
\end{tikzcd}\end{equation}

In plain language, this wiring diagram says the following: after an electric car $c$ is connected to a charging station $s$, the battery level of $c$ starts to increase.  Alternatively, we can use the wiring diagram in  \eqref{fig:olog34} to represent the concept of an electric car charging station.

\begin{comment}
\begin{figure}[h]
    \centering
    \includegraphics[scale=0.55]{p34.jpg}
    \caption{}
    \label{fig:olog34}
\end{figure}
\end{comment}

% https://q.uiver.app/#q=WzAsNCxbMCwwLCJXXzEnOiAiXSxbMiwwLCJcXGJlZ2lue21hdHJpeH1cXGJ1bGxldFxcXFwgKENfe3MsY30sXFxidWxsZXQsMClcXGVuZHttYXRyaXh9Il0sWzQsMCwiXFxiZWdpbnttYXRyaXh9XFxidWxsZXRcXFxcIChDX3tzLGN9LFxcYnVsbGV0LDEpXFxlbmR7bWF0cml4fSJdLFs2LDAsIlxcYmVnaW57bWF0cml4fVxcYnVsbGV0XFxcXCAoQl9jXissXFxidWxsZXQsMSlcXGVuZHttYXRyaXh9Il0sWzEsMiwiIiwwLHsib2Zmc2V0IjotMn1dLFsyLDMsIiIsMCx7Im9mZnNldCI6LTJ9XV0=
\begin{equation}\label{fig:olog34}
\begin{tikzcd}
	{W_1': } && {\begin{matrix}\bullet\\ (C_{s,c},\bullet,0)\end{matrix}} && {\begin{matrix}\bullet\\ (C_{s,c},\bullet,1)\end{matrix}} && {\begin{matrix}\bullet\\ (B_c^+,\bullet,1)\end{matrix}}
	\arrow[shift left=2, from=1-3, to=1-5]
	\arrow[shift left=2, from=1-5, to=1-7]
\end{tikzcd}\end{equation}

Next, for any bus $b$ and any human $p$, we will write $b \succ p$ (resp.\ $b \nsucc p$) to mean  `$p$ is inside $b$' (resp.\ $p$ is not inside $b$').  This allows us to define the sensing function $T_{b,p} : \{\bullet\} \to \{0,1\}$ where
\[
 T_{b,p} (\bullet) = \begin{cases} 0 &\text{ if $b \nsucc p$} \\
1 &\text{ if $b \succ p$} \end{cases}.
\]
We also define a numerical derivative of $dT_{b,p}$ similarly to $dC_{s,c}$.  In addition, we  define a sensing function $L_p : \{\bullet\} \to [-90,90]\times [-180,180]$ that keeps track of the location of $p$ at any time $t$ as a pair $L_p(\bullet)=(x,y)$ of latitudinal and longitudinal coordinates $x$ and $y$.  Assuming $L_p$ is a smooth function with respect to $t$, we can define its second derivative $L_p'' = \frac{d^2 L_p}{dt^2}$ and subsequently the sensing function $A_p : \{\bullet\} \to \{0,1\}$ via
\[
 A_p (\bullet) = \begin{cases} 0 &\text{ if $|L_p''|= 0$} \\
1 &\text{ if $|L_p''|> 0$} \end{cases}.
\]
We can also define a differentiable function $D_p : \{\bullet \} \to \mathbb{R}$ that measures, at any point in time $t$, the distance travelled by $p$ since $t=t_0$, where $t_0$ is some fixed value.  Writing $D_p' =  \frac{dD_p}{dt}$, we can then form the sensing function $M_p : \{\bullet\} \to \{0,1\}$ such that
\[
M_p (\bullet) = \begin{cases} 0 &\text{ if $|D_p'|=0$} \\
1 &\text{ if $|D_p'|>0$} \end{cases}.
\]
Using the formulation $S_2$, we can now represent the concept of a bus using the wiring diagram $W_2$ in  \eqref{fig:olog29}.

\begin{comment}
\begin{figure}[h]
    \centering
    \includegraphics[scale=0.55]{p29b.jpg}
    \caption{}
    \label{fig:olog29}
\end{figure}
\end{comment}

% https://q.uiver.app/#q=WzAsNCxbMCwwLCJXXzI6ICJdLFsyLDAsIlxcYmVnaW57bWF0cml4fVxcYnVsbGV0XFxcXCAoZFRfe2IscH0sXFxidWxsZXQsMSlcXGVuZHttYXRyaXh9Il0sWzQsMCwiXFxiZWdpbnttYXRyaXh9XFxidWxsZXRcXFxcIChBX3AsXFxidWxsZXQsMSlcXGVuZHttYXRyaXh9Il0sWzYsMCwiXFxiZWdpbnttYXRyaXh9XFxidWxsZXRcXFxcIChNX3AsXFxidWxsZXQsMSlcXGVuZHttYXRyaXh9Il0sWzEsMiwiIiwwLHsib2Zmc2V0IjotMn1dLFsyLDMsIiIsMCx7Im9mZnNldCI6LTJ9XV0=
\begin{equation}\label{fig:olog29}
\begin{tikzcd}
	{W_2: } && {\begin{matrix}\bullet\\ (dT_{b,p},\bullet,1)\end{matrix}} && {\begin{matrix}\bullet\\ (A_p,\bullet,1)\end{matrix}} && {\begin{matrix}\bullet\\ (M_p,\bullet,1)\end{matrix}}
	\arrow[shift left=2, from=1-3, to=1-5]
	\arrow[shift left=2, from=1-5, to=1-7]
\end{tikzcd}\end{equation}

In everyday language, this wiring diagram says that the concept of a bus  is characterised by the following sequence of events: a person enters a bus, the bus begins moving, resulting in the location of the person changing.   Alternatively, we can use the wiring diagram in  \eqref{fig:olog35} to represent the same concept.

\begin{comment}
\begin{figure}[h]
    \centering
    \includegraphics[scale=0.57]{p35b.jpg}
    \caption{}
    \label{fig:olog35}
\end{figure}
\end{comment}

% https://q.uiver.app/#q=WzAsNSxbMCwwLCJXXzInOiAiXSxbNCwwLCJcXGJlZ2lue21hdHJpeH1cXGJ1bGxldFxcXFwgKFRfe2IscH0sXFxidWxsZXQsMSlcXGVuZHttYXRyaXh9Il0sWzYsMCwiXFxiZWdpbnttYXRyaXh9XFxidWxsZXRcXFxcIChBX3AsXFxidWxsZXQsMSlcXGVuZHttYXRyaXh9Il0sWzgsMCwiXFxiZWdpbnttYXRyaXh9XFxidWxsZXRcXFxcIChNX3AsXFxidWxsZXQsMSlcXGVuZHttYXRyaXh9Il0sWzIsMCwiXFxiZWdpbnttYXRyaXh9XFxidWxsZXRcXFxcIChUX3tiLHB9LFxcYnVsbGV0LDApXFxlbmR7bWF0cml4fSJdLFsxLDIsIiIsMCx7Im9mZnNldCI6LTJ9XSxbMiwzLCIiLDAseyJvZmZzZXQiOi0yfV0sWzQsMSwiIiwwLHsib2Zmc2V0IjotMn1dXQ==
\begin{equation}\label{fig:olog35}
\begin{tikzcd}
	{W_2': } && {\begin{matrix}\bullet\\ (T_{b,p},\bullet,0)\end{matrix}} && {\begin{matrix}\bullet\\ (T_{b,p},\bullet,1)\end{matrix}} && {\begin{matrix}\bullet\\ (A_p,\bullet,1)\end{matrix}} && {\begin{matrix}\bullet\\ (M_p,\bullet,1)\end{matrix}}
	\arrow[shift left=2, from=1-5, to=1-7]
	\arrow[shift left=2, from=1-7, to=1-9]
	\arrow[shift left=2, from=1-3, to=1-5]
\end{tikzcd}\end{equation}

\paragraph[Ologs]\label{para:eg-step2-ologs} Using the wiring diagram in  \eqref{fig:olog28} as a proxy for the concept of an electric car charging station, and that in  \eqref{fig:olog29} as a proxy for the concept of a bus, we can now attempt to calculate a distance between these two wiring diagrams using the method proposed in Section \ref{sec:distanceedit-WD}.  We will merely compute an upper bound of the distance by finding a third wiring diagram $W_3$ that is connected to both $W_1$ and $W_2$ via edit paths.  Diagram $W_3$ will represent an abstract process that accounts for commonalities between $W_1$ and $W_2$.  The labels in $W_3$ will make use of abstract concepts that give a connection between the concepts appearing in labels of $W_1$ and $W_2$; all these concepts will also be related via ologs. 

We begin by constructing  an olog as in  \eqref{fig:olog26}.

\begin{comment}
\begin{figure}[h]
    \centering
    \includegraphics[scale=0.43]{p26h.jpg}
    \caption{}
    \label{fig:olog26}
\end{figure}
\end{comment}

\begin{equation}\label{fig:olog26}
\begin{tikzcd}[scale cd=0.46]
	&&& {P_0: \boxed{\begin{matrix}\text{a pair } (y,f) \text{ where}\\ y \text{ is an entity,} \\ f : \{y\} \to \{0,1\} \\ \text{is a sensing function,} \\ \text{and }f(y)=0  \end{matrix}}} && {P: \boxed{\begin{matrix}\text{a pair } (y,f) \text{ where}\\ y  \text{ is an entity, and} \\ f : \{y\} \to \{0,1\} \\ \text{is a sensing function}  \end{matrix}}} \\
	&& {P_1: \boxed{\begin{matrix}\text{a pair } (y,f) \text{ where}\\ y \text{ is an entity,} \\ f : \{y\} \to \{0,1\} \\ \text{is a sensing function,}  \\ \text{and }f(y)=1  \end{matrix}}} \\
	& {A_0: \boxed{\begin{matrix}\text{a pair } (c,B_c^+) \text{ where}\\ c \text{ is an electric car, and} \\ B_c^+ = 0   \end{matrix}}} &&& {A: \boxed{\begin{matrix}\text{a pair } (c,B_c^+) \text{ where}\\ c \text{ is an electric car}   \end{matrix}}} \\
	{A_1: \boxed{\begin{matrix}\text{a pair } (c,B_c^+) \text{ where}\\ c \text{ is an electric car, and} \\ B_c^+ = 1   \end{matrix}}} &&& {\{0\}} && {\{0,1\}} \\
	&& {\{1\}}
	\arrow["e"{pos=0.3}, from=1-6, to=4-6]
	\arrow["{i_0}", from=4-4, to=4-6]
	\arrow[from=1-4, to=1-6]
	\arrow[from=1-4, to=4-4]
	\arrow["j", from=3-5, to=1-6]
	\arrow["ej", from=3-5, to=4-6]
	\arrow[from=3-2, to=3-5]
	\arrow[from=3-2, to=4-4]
	\arrow["{i_1}"', from=5-3, to=4-6]
	\arrow[from=4-1, to=3-5]
	\arrow[from=4-1, to=5-3]
	\arrow[from=2-3, to=5-3]
	\arrow[from=2-3, to=1-6]
\end{tikzcd}\end{equation}

To build this olog, we begin with the type $\lceil$a pair $(y,f)$ where $y$ is an entity, and $f$ is a $\{0,1\}$-valued sensing function that can be applied to $y$$\rceil$, which we denote by $P$.  We define the aspect $e$ to be the `evaluation map' that maps $(y,f)$ to the  number $f(y)$.  Then, we can construct the subtype $A$ that represents all the pairs of the form $(c,B_c^+)$ where $c$ is an electric car. (Recall that a type $T'$ is a \emph{subtype} of another type $T$ in an olog if every instance of $T'$ is also an instance of $T$.) That is, an instance of $A$ is a pair where the second coordinate is already fixed as the sensing function $B_c^+$ for the electric car $c$.  We can then define $A_0$ as the fiber product of $ej$ and $i_0$, $A_1$ as the fiber product of $ej$ and $i_1$, $P_0$ as the fiber product of $e$ and $i_0$, and $P_1$ as the fiber product of $e$ and $i_1$.  This way, we can take the type $A$ to be a representation of the concept $B_c^+$ in an olog, and take the type $A_1$ to be a representation of the concept $B_c^+=1$, which corresponds to the label $(B_c^+,\bullet,1)$ in the wiring diagram $W_1$.  Note that the instances of $P$ are in 1-1 correspondence with sensing functions $f^y: \{\bullet\} \to \{0,1\}$ that depend on the entity $y$, so we can use $P$ as a type that represents the concept of an arbitrary sensing function $f^y : \{\bullet\} \to \{0,1\}$ that tracks some characteristic of  some entity $y$.

Next, we can form an olog as in  \eqref{fig:olog25}.

\begin{comment}
\begin{figure}[h]
    \centering
    \includegraphics[scale=0.33]{p25e.jpg}
    \caption{}
    \label{fig:olog25}
\end{figure}
\end{comment}

\begin{equation}\label{fig:olog25}
\begin{tikzcd}[scale cd=0.4]
	&&& {T_0: \boxed{\begin{matrix} \text{a triple }(x,y,\thicksim)\text{ where} \\ x, y \text{ are entities,} \\ \thicksim \text{ is a relation between entities} \\ \text{and }x \nsim y \end{matrix}} } && {T: \boxed{\begin{matrix} \text{a triple }(x,y,\thicksim)\text{ where} \\ x, y \text{ are entities, and} \\ \thicksim \text{ is a relation between entities} \end{matrix}} } \\
	&& {T_1: \boxed{\begin{matrix} \text{a triple }(x,y,\thicksim)\text{ where} \\ x, y \text{ are entities,} \\ \thicksim \text{ is a relation between entities} \\ \text{and }x \thicksim y \end{matrix}} } \\
	& {G_0: \boxed{\begin{matrix} \text{a pair}(s,c)\text{ where} \\ s \text{ is an electric car} \\ \text{charging station,} \\ c \text{ is an electric car,} \\ \text{and }s\nvDash c  \end{matrix}} } &&& {G: \boxed{\begin{matrix} \text{a triple }(s,c,\vDash)\text{ where} \\ s \text{ is an electric car} \\ \text{charging station, and} \\ c \text{ is an electric car}  \end{matrix}} } \\
	{G_1: \boxed{\begin{matrix} \text{a pair}(s,c)\text{ where} \\ s \text{ is an electric car} \\ \text{charging station,} \\ c \text{ is an electric car,} \\ \text{and }s\vDash c  \end{matrix}} } \\
	&&& {\{0\}} && {\{0,1\}} \\
	&& {\{1\}} \\
	\\
	\\
	\\
	&&&&&&&&&& {}
	\arrow["{i_0}", from=5-4, to=5-6]
	\arrow["{i_1}"', from=6-3, to=5-6]
	\arrow["q", from=1-6, to=5-6]
	\arrow["{k_0}", from=1-4, to=1-6]
	\arrow[from=1-4, to=5-4]
	\arrow["{k_1}"{pos=0.6}, from=2-3, to=1-6]
	\arrow[from=2-3, to=6-3]
	\arrow["{j'}", from=3-5, to=1-6]
	\arrow["qj", from=3-5, to=5-6]
	\arrow[from=3-2, to=3-5]
	\arrow[from=3-2, to=5-4]
	\arrow[from=4-1, to=3-5]
	\arrow[from=4-1, to=6-3]
\end{tikzcd}\end{equation}

We begin by defining the type $\lceil$a triple $(x,y,\thicksim)$ where $x,y$ are entities, and $\thicksim$ is a relation between entities$\rceil$, denoted $T$, and the subtype $G$ that represents triples of the form $(x,y,\vDash)$, where $\vDash$ is the `is plugged into' relation from earlier.  The aspect $j'$ is the inclusion from $G$ into $T$, while $q$ is the aspect that takes a triple $(x,y,\thicksim)$ to the value $1$ (resp.\ $0$) if $x \thicksim y$ (resp.\ $x \nsim y$).  As before, $i_0$ and $i_1$ denote the respective set inclusions.  Then, we define $T_0$ as the fiber product of $q$ and $i_0$, $T_1$ as the fiber product of $q$ and $i_1$, $G_0$ as the fiber product of $qj'$ and $i_0$, and $G_1$ as the fiber product of $qj'$ and $i_1$.  Now  we can use the types $G_1, G_0$  as representations for the concepts defined by the labels $(C_{s,c},\bullet, 1)$ and  $(C_{s,c},\bullet, 0)$.  

Note that for an arbitrary relation $\thicksim$ between entities and any two entities $x$ and $y$, we can define a sensing function $F_{x,y,\thicksim} : \{\bullet\} \to \{0,1\}$ that gives the same value as $q$, i.e.\ $F_{x,y,\thicksim}(\bullet)$ equals $1$ (resp.\ $0$) when $x \thicksim y$ (resp.\ $x \nsim y$).

\paragraph[Elementary edit operations]\label{para:steps-eeo}   We give two different approaches to calculating the distance between the concept of an `electric car charging station' and a `bus', depending on the choices of wiring diagrams and cost functions along the way.

\subparagraph[Approach 1] Let us use wiring diagrams $W_1$ and $W_2$ as formulations of $S_1$ and $S_2$, respectively. The two  wiring diagrams $W_1$ and $W_2$ are  related via  elementary edit operations on wiring diagrams as in  Figure \ref{fig:olog31}.  Below, we use $\Rightarrow$ to denote an elementary edit operation so as to better  distinguish them from the arrows within wiring diagrams.

\begin{figure}[h]
    \centering
    \includegraphics[scale=0.5]{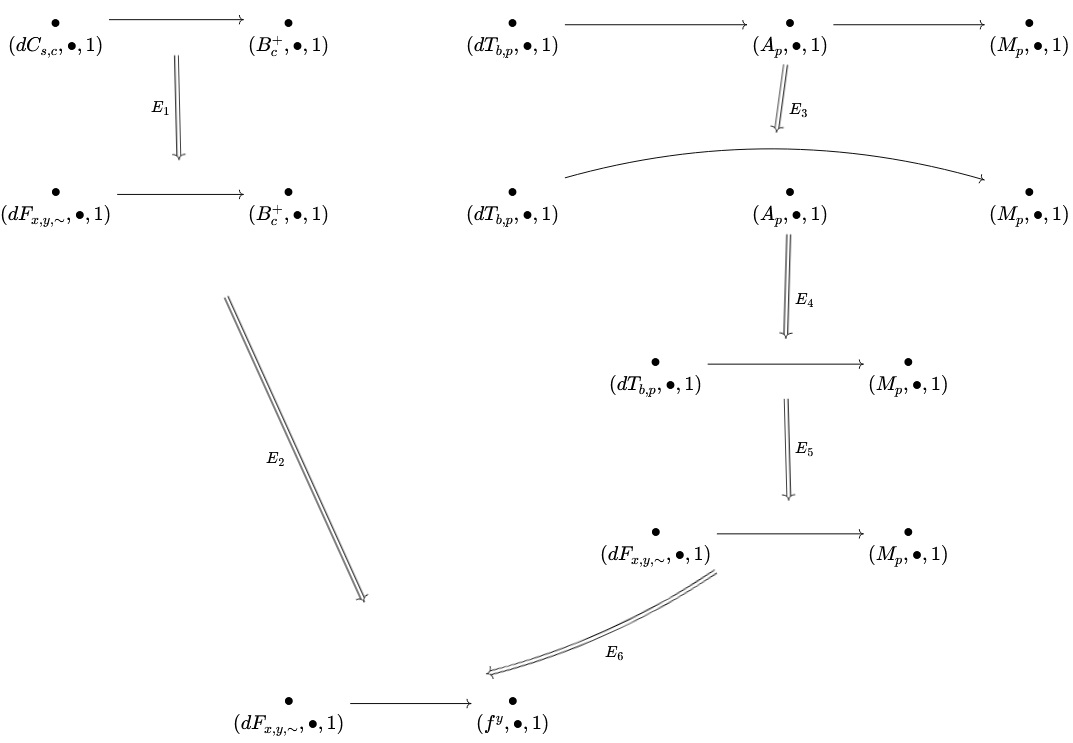}
    \caption{}
    \label{fig:olog31}
\end{figure}

In Figure \ref{fig:olog31}, $E_1, E_2, E_5, E_6$ are all operations of type (v) in the sense of Section \ref{sec:distanceedit-WD}, i.e.\ each of them is just a change of a single label;  in all these instances, we are changing a label to a more abstract label.  The operation $E_3$ is of type (viii) - it corresponds to an irreducible morphism in the category $\mathcal{R}(V)$ where $V$ is the set of vertices in $W_2$.  The operation $E_4$ is of type (ii), where the vertex with a single label $(A_p,\bullet,1)$ is deleted.  If write $E_i^{-1}$ for the inverse of an elementary edit operation $E^i$, then each $E_i$ is again an elementary edit operation and  $W_1$ is transformed into $W_2$ via the sequence of operations
\[
(E_1, E_2, E_6^{-1}, E_5^{-1}, E_4^{-1}, E_3^{-1}).
\]
Now for any cost function $c : \mathrm{EEO}(W_s^\bullet) \to \mathbb{R}_{>0}$,  we  obtain  an upper bound for the distance $d(W_1, W_2)$ using the metric $d$ from Lemma \ref{lem:def:distWD}:
\[
d(W_1, W_2) \leq \sum_{i=1}^2 c(E_i)+ \sum_{i=3}^6 c(E_i^{-1}) .
\]

\subparagraph[Approach 2]  Let us use $W_1'$ and $W_2'$ as representations of $S_1$ and $S_2$, respectively.  In this case, no wiring diagram labels are defined using numerical derivatives of sensing functions, and $W_1'$ and $W_2'$ are related via elementary edit operations as shown in Figure \ref{fig:olog33}.  We will also make use of the ologs in  \eqref{fig:olog26} and \eqref{fig:olog25} more directly in defining our cost function $c$ for the metric on $W_s^\bullet$.
\begin{figure}[h]
    \centering
    \includegraphics[scale=0.55]{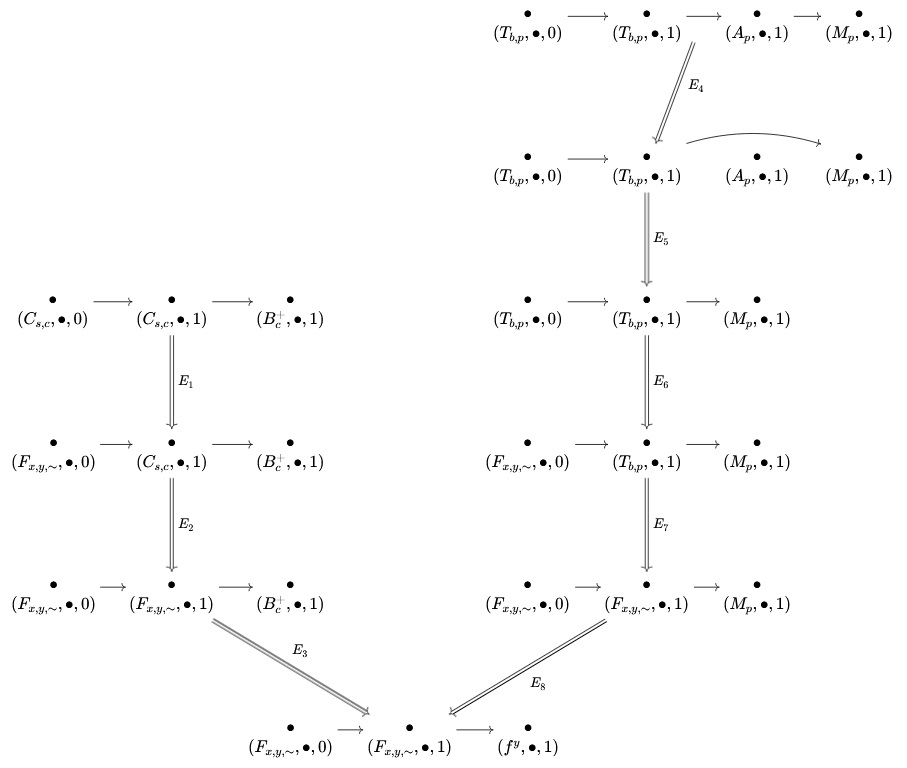}
    \caption{}
    \label{fig:olog33}
\end{figure}

In Figure \ref{fig:olog33}, the operation $E_4$ is of type (viii) while $E_5$ is of type (ii).  On the other hand, the operations $E_1, E_2, E_3, E_6, E_7, E_8$ are all of type (v);  each of these six operations involves changing the sensing function in a label to a different sensing function.  For example, $E_1$ involves changing $(C_{s,c},\bullet, 0)$ to $(F_{x,y,\thicksim},\bullet,0)$, while $E_3$ involves changing $(B_c^+, \bullet, 1)$ to $(f^y, \bullet, 1)$.  Note that all the labels involving $C_{s,c}, B_c^+, F_{x,y,\thicksim}, f^y$ in  Figure \ref{fig:olog33} are represented by types in the ologs  in  \eqref{fig:olog26} and \eqref{fig:olog25}:

\begin{center}
\begin{tabular}{|c | c |} 
 \hline
 Label & Type  \\  
 \hline\hline
 $(C_{s,c},\bullet,0)$ & $G_0$   \\ 
 \hline
 $(C_{s,c},\bullet,1)$ & $G_1$  \\
 \hline
 $(B_c^+,\bullet,1)$ & $A_1$  \\
 \hline
 $(F_{x,y,\thicksim},\bullet,0)$ &  $T_0$ \\
 \hline
 $(f^y,\bullet,1)$ & $P_1$  \\  
 \hline
\end{tabular}
\end{center}

Using constructions similar to those in  \eqref{fig:olog26} and \eqref{fig:olog25}, we could expand these ologs to contain types corresponding to all the other labels in Figure \ref{fig:olog33}, too.  Combining these ologs into a single olog $O$,  then choosing a cost function $c_O$ on the edges underlying $O$ and proceeding  as in Example \ref{eg:metricWsbwithologdist}, we obtain a metric $d$ on $W_s^\bullet$ that utilizes the ologs in  \eqref{fig:olog26} and \eqref{fig:olog25} in calculating $d(W_1', W_2')$.  From  Figure \ref{fig:olog33}, we now have the upper bound for $d(W_1', W_2')$
\[
d(W_1', W_2') \leq \sum_{i=1}^3 c(E_i) + \sum_{i=4}^8 c(E_i^{-1}).
\]

\begin{rem}
\begin{itemize}
    \item[(1)] Whether we take Approach 1 or Approach 2 above, the actual distance between the two wiring diagrams being compared would depend on the  olog being used to represent all the relevant concepts and the specific elementary edit operations allowed.  For example, in Approach 1, if we had allowed an arbitrary change of label in elementary edit operations of type (v), then there would be such an operation connecting the second diagram in the left column and the third diagram in the right column in Figure \ref{fig:olog31}.  The specific elementary edit operations of type (v)  used in Figures \ref{fig:olog31} and \ref{fig:olog33} make the point, that the two concepts we are trying to connect (`electric car charging station' and `bus') can both be connected to a more abstract concept (represented by the wiring diagram at the bottom in either  Figure \ref{fig:olog31} or Figure \ref{fig:olog33}).  In particular, the operation $E_3$ in Figure \ref{fig:olog31} and the operation $E_4$ in Figure \ref{fig:olog33}, both of which are morphisms in some category $\mathcal{R}(V)$, make mathematically precise what it means for a concept to be ``more abstract'' than another.
    \item[(2)] One could argue that, strictly speaking, some of the wiring diagrams in Figures \ref{fig:olog31} and \ref{fig:olog33} do not satisfy our definition of wiring diagrams (Definition \ref{def:WD}) because, in WD1, we require that the first argument of every vertex label be a specific sensing function, whereas entries such as $dF_{x,y,\thicksim}$ and $f^y$ in Figures \ref{fig:olog31} and \ref{fig:olog33} are `generic'  sensing functions.  We can get around this technical issue by extending the definition of wiring diagrams and allowing the arguments of vertex labels to be types in an olog.  This will be explored in a sequel to this article.
\end{itemize}
\end{rem}

\paragraph[Summary]\label{para:summaryofcalc} We now give a summary of the steps that one can follow in order to  compute the distance between pairs of concepts in a given application domain.  Suppose the concepts we are concerned with are elements of an indexed set $\{N_i\}_{i\in I}$.  Then one can perform the following tasks in the listed order:
\begin{enumerate}
    \item[(1)] Define the relevant sensing functions.
    \item[(2)] Define  wiring diagrams $W_i$ that represent the concepts $N_i$.
    \item[(3)] Construct an olog (or ologs) containing types that correspond to all the labels in  the wiring diagrams $W_i$  (e.g.\ see  \ref{sec:WDandologs} and \ref{sec:relnsfs}).
    \item[(4)] Decide on a list of acceptable elementary edit operations on wiring diagrams.  For example, one may wish to restrict the kinds of allowed operations of type (v) in the list in \ref{sec:distanceedit-WD}.
    \item[(5)] Decide on a cost function $c$ in the definition of the metric on wiring diagrams in Lemma \ref{lem:def:distWD}.  More specifically, one needs to decide on the cost of each elementary edit operation, such as the cost of an operation of type (v) - see Example \ref{eg:metricWsbwithologdist}. 
    \item[(6)] For any two distinct concepts $N_i, N_j$, calculate their distance $d(N_i, N_j)$ using the definition in Lemma \ref{lem:def:distWD}.  Each possible edit path from $N_i$ to $N_j$ would constitute a `justification', or a mathematical breakdown of the analogy between concept $N_i$ and concept $N_j$.
\end{enumerate}

For the main example in this section, Steps (1) and (2) were implemented in  \ref{para:steps-sf}, Steps (3) was implemented in \ref{para:eg-step2-ologs}, while Steps (4) through (6) were implemented in \ref{para:steps-eeo}.

\section{Future directions}\label{sec:futureDs}

In this article, we first recalled how ologs can be used to represent abstract concepts.  Then we define the concept of wiring diagrams where labels at vertices correspond to types in an olog.  Wiring diagrams allow us to represent concepts corresponding to temporal processes, which may not be so easily represented using ologs alone.  We can think of wiring diagrams as giving a coordinate system, or a state space on which one can develop a theory of problem-solving.  This direction will be explored in a sequel to this article.

As mentioned in \ref{para:intro-WD}, the term `wiring diagram' has also been defined and studied as operads in works such as \cite{SR-WDdisc,spivak2013operad,VSL,yau2018operads}.  The wiring diagrams as defined in this article certain show features of self-similarity - under appropriate assumptions,  one can replace any vertex in a wiring diagram (along with its state vector) by a wiring diagram to obtain a more complicated wiring diagram.  It would be worthwhile to reconcile the definition of wiring diagrams in this article with those in the aforementioned works.  In the present article, we refrained from doing so in order to keep our theory accessible to a wider audience.

Example \ref{eg:WDgraphsonsameV} hinted at the complexity that can be encoded within the underlying graphs of wiring diagrams.  For example, a wiring diagram of the form \eqref{fig:olog18} may be an indication of the social behavior of collaboration.  This opens up a host of questions to be answered.  For example, given a sequence of events over time, what are the possible wiring diagrams that possess these events as the state vectors, and how many are there?  Mathematically, this is related to the problem of enumerating all the preorders or partial orders on a set of given objects, and perhaps related to the notion of graph fibrations \cite{boldi2002fibrations}.  One can also ask if wiring diagrams can be used to classify behaviors, whether in the context of biology (behaviors of different species), social science (behaviors of humans or organizations), finance (behaviors of markets).

Lastly, the definition of wiring diagrams we adopted  in this paper applies  to any type of data that admits a fibration into a linearly ordered set - the concept of ordering among the state vectors in a wiring diagram comes from condition WD2 in Definition \ref{def:WD}.  In  all the wiring diagrams we considered in this paper, the state vectors always corresponded to events that can be partially ordered with respect to time (i.e.\ whether one event is required to occur before another).  Nonetheless, one can just as well consider wiring diagrams where the ordering is given by causation, for example, as in the case of mathematical proofs.  As shown in examples in \cite[Sections 6.6-6.7]{SpivakKent} (see also \cite{Petros1}), some mathematical definitions can  be expressed via ologs, after which mathematical lemmas can be expressed as commutativity of diagrams within the olog.  One could potentially think of a mathematical proof as a wiring diagram where the state vectors correspond to various `milestones' in the proof, and where arrows are defined using causation among the milestones.  One could then make precise what we mean when we say two mathematical proofs are `similar', or that the argument of one proof in a specific context `carries over' in a different context.

\bibliography{refsMR}{}
\bibliographystyle{plain}

\end{document}